\def\BibTeX{{\rm B\kern-.05em{\sc i\kern-.025em b}\kern-.08em
    T\kern-.1667em\lower.7ex\hbox{E}\kern-.125emX}}
\newtheorem{theorem}{Theorem}
\newtheorem{corollary}{Corollary}
\newcommand{\RNum}[1]{\uppercase\expandafter{\romannumeral #1\relax}}
\newcommand{\tabincell}[2]{
	\begin{tabular}{@{}#1@{}}#2\end{tabular}
}
\begin{document}

\title{ CSQF-based Time-Sensitive Flow Scheduling in Long-distance Industrial IoT Networks\\
}

\author{
	\IEEEauthorblockN{Yudong~Huang,~Tao~Huang,~\IEEEmembership{Senior~Member,~IEEE},~Xinyuan~Zhang,~Shuo~Wang,\\Hongyang Du,~Dusit Niyato,~\IEEEmembership{Fellow,~IEEE},~and~Fei Richard Yu,~\IEEEmembership{Fellow,~IEEE} }

\thanks{ Y. Huang and X. Zhang are with the State Key Laboratory of Networking and Switching Technology, BUPT, Beijing, 100876, P.R. China  (e-mail: hyduni@bupt.edu.cn, zhangxinyuan0181@bupt.edu.cn).\protect\\
	
 	T. Huang and S. Wang are with the State Key Laboratory of Networking and Switching Technology, BUPT, Beijing, 100876, P.R. China. They are also with the Purple Mountain Laboratories, Nanjing, 211111, P.R. China (e-mail: htao@bupt.edu.cn, shuowang@bupt.edu.cn).\protect\\
 
	H. Du and D. Niyato are with the School of Computer Science
	and Engineering, Nanyang Technological University, Singapore (e-mail: hongyang001@e.ntu.edu.sg, dniyato@ntu.edu.sg).\protect\\
	
	F. Richard Yu is with the Department of Systems and Computer Engineering, Carleton University, Ottawa, ON K1S 5B6, Canada (e-mail:	richard.yu@carleton.ca).
}}

\maketitle

\begin{abstract}
Booming time-critical services, such as automated manufacturing and remote operations, stipulate increasing demands for facilitating large-scale Industrial Internet of Things (IoT). Recently, a cycle specified queuing and forwarding (CSQF) scheme has been advocated to enhance the Ethernet. However,  CSQF only outlines a foundational equipment-level primitive, while how to attain network-wide flow scheduling is not yet determined. Prior endeavors primarily focus on the range of a local area, rendering them unsuitable for long-distance factory interconnection. This paper devises the cycle tags planning (CTP) mechanism, the first integer programming model for the CSQF, which makes the CSQF practical for efficient global flow scheduling. In the CTP model, the per-hop cycle alignment problem is solved by decoupling the long-distance link delay from cyclic queuing time. To avoid queue overflows, we discretize the underlying network resources into cycle-related queue resource blocks and detail the core constraints within multiple periods. Then, two heuristic algorithms named flow offset and cycle shift (FO-CS) and Tabu FO-CS are designed to calculate the flows' cycle tags and maximize the number of schedulable flows, respectively. Evaluation results show that FO-CS increases the number of scheduled flows by 31.2\%. The Tabu FO-CS algorithm can schedule 94.45\% of flows at the level of 2000 flows. 
\end{abstract}

\begin{IEEEkeywords}
Bounded Latency, Flow Scheduling,  Industrial Internet of Things, Time-Sensitive Networks.
\end{IEEEkeywords}

\section{Introduction}

The Internet has been prone to congestion, crumble, and long-tail delays in the past few decades. As an emerging branch of the Internet, the Industrial Internet of Things (IoT) is leading a new wave of industrial revolution and network innovation\cite{iiot}\cite{10293177}. Different from traditional best-effort  transmission of bit information, industrial time-critical applications, such as remote operations\cite{remote_control}, smart grid\cite{smart_grid}, and digital twin\cite{digital_twin}, require task-oriented delivery with bounded delay and jitter. By leveraging the booming 5G and edge computing technologies,  several works\cite{URLLC}\cite{9788573} have studied the predictable low-latency communication paradigm within the range of a factory or a local area. However, the possibility of providing long-distance deterministic forwarding services for  large-scale cyber-physical systems remains to be explored. 

In the sensor-controller-actuator environment of Industrial IoT, many time-sensitive data need to be collected and transmitted to the edge node of the factory or cloud server for further analysis and process. Some of them are aperiodic traffic, such as event alerts and critical message updates, while others exhibit the features of periodic sending. For example, remote control of a slave robotic arm requires issuing kinematic packets every 10 ms to synchronize the Cartesian coordinates, acceleration, and angular velocity information. According to the ITU report\cite{ITUR}, machine-type devices could reach 97 billion in 2030. Rapidly growing IoT traffic leaves industrial proprietary buses facing bandwidth shortages,  prohibitive costs, and compatibility issues. The Ethernet is on the rise and promises to enable converged industrial networks.



On the one hand, the time-sensitive networking (TSN)  is a key candidate to enhance the Ethernet with real-time transmission capability. Since time synchronization struggles with guaranteeing accuracy in large-scale industrial sensor-controller-actuator environments, many TSN scheduling mechanisms, such as time-controlled gating, and cyclic queuing and forwarding (CQF), are limited to layer-2 local area networks (e.g., seven hops advised by the IEEE 802.1D protocol\cite{incre_tssdn}). Moreover, TSN generally assumes that the propagation delay of the link is virtually non-existent,  which is infeasible to be directly applied to long-distance networks.


On the other hand, the IETF DetNet working group is currently extending the TSN mechanisms to layer-3 deterministic networks. One of the promising solutions is cycle specified queuing and forwarding (CSQF)\cite{load_balancing_csqf}. CSQF divides the sending time of an out-port into a series of equal time intervals; each time interval is called a cycle (noted as $T$). By tagging a sequence of cycle information to the packet headers, packets are expected to be transmitted at precise reserved durations along the path. Hence, once the cycle size and label stacks are determined, the end-to-end latency is bounded. More importantly, the end-to-end jitter is theoretically independent of hop count, which is controlled to no more than $2T$. 


CSQF is essentially an underlying primitive that needs the special hardware support\cite{PCSQ} for cycle mapping. CENI\cite{ceni} testbed evaluations have shed light on the performance of devices equipped with CSQF capabilities in real-world systems. However,  a cycle tag stack only explicitly defines the cyclic forwarding behavior of a single flow along a linear path. To make CSQF practical, network administrators necessitate an overarching strategy to effectively orchestrate network-wide traffic. Computing global scheduling cycle tags for time-sensitive flows continues to be a pressing challenge.

The scheduling task is analogous to the bin packing problem, with both being characterized as NP-hard\cite{itp}\cite{joint_large1}. While numerous TSN flow scheduling schemes\cite{9854862}\cite{explore_limits}  have been proposed with the time-controlled gating and CQF, they generally ignore the link delay and cannot reach the goal of long-distance deterministic transmission. 
Recently, the authors of \cite{joint_large1} studied flow scheduling algorithms for CSQF,  but their focus was primarily on path generation, utilizing straightforward constraints that only addressed capacity and delay aspects, ignoring the cycle mapping and bounded queue length constraints. Moreover,  \cite{joint_large1} performs admission control at the flow level, which is incapable of adaptively adjusting the cycle tags, whereas we focus on the per-packet granularity and address the cycle tag computation problem.

It is not trivial to model CSQF. First, in the end-to-end latency calculation, long-distance link delays are hard to normalize and align with cyclic scheduling behaviors, which prevents hop-by-hop delays from being considered a bounded constant value. Second,  unlike traditional bandwidth allocation mechanisms, CSQF requires fine-grained time resource allocation. The industrial real-time applications send periodic traffic with different flow sending periods, which further arises challenges to unify cycle-related queue resource model. Third, since multiple flows will compete for the queue resources of each output port along the path, the cycle tag stack needs to be elaborately computed to avoid violating bounded queue length constraints. To our knowledge, this study is the inaugural effort to present the comprehensive integer programming mathematical model for the global scheduling of CSQF.

This paper targets to make the CSQF  practical for efficient network-wide flow scheduling in long-distance Industrial IoT networks.  The model hypotheses include: (1) Flow features and propagation delays are obtained prior, where flows are a set of periodic unicast flows marked by the highest priority. (2) CSQF-enabled switches adopt frequency synchronization, and the central controller applies zero phase deviation as a general scheduling basis. (3) All industrial terminals can send packets accurately, without considering the jitter from the source node. The main contributions are:

$\bullet$ We present a key insight that queuing delay could be bounded by binding the maximum queue length to a cycle time factor. Then, we establish a novel cycle tags planning (CTP) model\footnote{ The open-source code of the CTP model  is at https://github.com/Hyduni001/CTP-Model-for-CSQF.}, where the per-hop cycle alignment problem is solved by decoupling the long-distance link delay from cyclic queuing time. To avoid queue overflows under multi-flow aggregations, we discretize the network resources into cycle-related queue resource blocks and satisfy the bounded queue length requirement by constructing mapping constraints between flow and resource blocks.

  
  $\bullet$ We design a novel heuristic algorithm named flow offset and cycle shift (FO-CS)  to incrementally calculate the network-wide traffic's cycle tags and quickly generate per-flow scheduling solutions. By leveraging the Tabu search method, we also develop a Tabu FO-CS algorithm to maximize the  number of schedulable flows.
  
   $\bullet$ We evaluate the FO-CS algorithm under wide-area topologies in remote industrial control scenarios.  Simulation results demonstrate that FO-CS improves the scheduling flow number by 31.2\% compared with the naive algorithm. Furthermore, the Tabu FO-CS algorithm can schedule 94.45\% of flows at the flow level of 2000 under the realistic topology of Internet2. 

The structure of this paper is as follows: Section II provides a brief overview of the CSQF under Industrial IoT scenarios. Section III delves into the motivation. In Section IV, we present the formulation of the CTP model, while Section V details the design of the FO-CS algorithm. Section VI showcases our evaluation results. Related works are discussed in Section VII. The paper concludes in Section VIII.

\section{Background}

Currently, the Industrial IoT has brought a new era of network innovations. Unlike fast-fluctuation TCP applications, in the industrial automation and intelligent manufacturing scenarios,  real-time applications (e.g., remote operations and cloud control) issue promised aperiodic/periodic traffic and require the bounded latency of milliseconds and  jitter of microseconds\cite{URLLC}\cite{in_edge_control}.  The deterministic networking (DetNet) is generally considered as private WANs,  rather than replacing the Internet. The difference is that private WANs allow closed access control while  the Internet is managed by large groups of domains. DetNet is compatible with the existing Internet since the best-effort flow can still be transmitted by DetNet. 

Table \ref{table1} summarizes the examples of delay, jitter, data rate, and payload size requirements for time-sensitive tasks in typical DetNet scenarios, such as  discrete automation, process automation, and electricity distribution, according to 3GPP TS22261\cite{ts22261} and IETF RFC 8578\cite{use_case}. A small payload means it is less than or equal to 256 bytes, and a big payload generally does not exceed the MTU size. Note that all the values in this table are example values, which are varied in specific deployment configurations.


\begin{figure}[]
	\centering
	\includegraphics[width=3.5in]{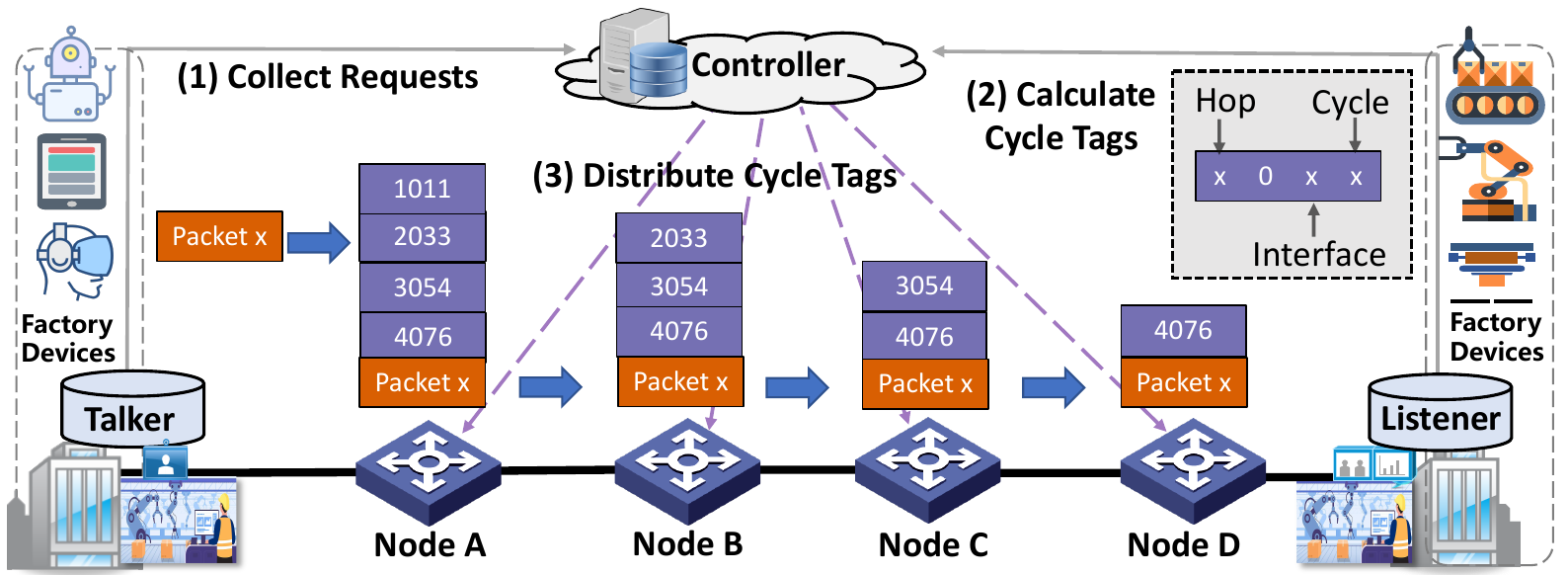}
	\caption{The working mechanism of CSQF in the factory interconnection or cloud PLC scenarios, where the control plane has to compute the cycle tags (SID labels) for every packet along the  path.}
	\label{fig:background}
\end{figure}

\begin{table}[]
	\centering
	\scriptsize
	\caption{Typical usecases and requirements for DetNet}
	\begin{tabular}{p{2.3cm}p{1.1cm}p{1cm}p{1.2cm}p{1.3cm}}
		\toprule[1pt]
		Scenarios / tasks & Latency & Jitter & Data rate & Payload size \\ \midrule[0.7pt]
		
		Discrete automation& 1-10 ms & 1-100 $\upmu$s & 1-10 Mbps &  Small to big\\
		
		Process automation-remote control & 50 ms & 20 ms & 1-100 Mbps & Small to big \\
		
		Process automation-monitoring& 50 ms & 20 ms &  1 Mbps & Small \\
		
		Electricity distribution-medium voltage & 40-100 ms& 1 ms &  10 Mbps & Small to big \\
		
		Electricity distribution-high voltage & 5-10 ms & 100 $\upmu$s&  10 Mbps & Small \\
		
		Electricity distribution-extra-high voltage & 5 ms & 10 $\upmu$s & / & Small \\
		
		Intelligent transport systems- backhaul& 10 ms & 20 ms & 10 Mbps & Small to big \\
		
		Tactile interaction & 5 ms & TBC & 10 Mbps & Small to big \\
		\bottomrule[1pt]
	\end{tabular}
	\label{table1}
\end{table}

\begin{figure*}[tp]
	\centering
	\subfigure[ The CQF mechanism. In each GCL, the 'o' denotes open and the 'c' denotes closed.]{
		\begin{minipage}[t]{0.3\textwidth}
			\centering
			\includegraphics[width=\textwidth]{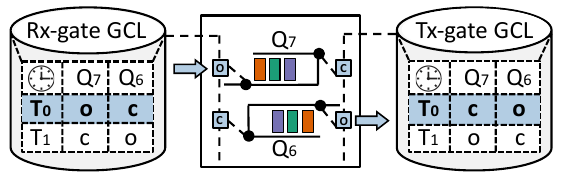}
			\label{fig2:CQF}
		\end{minipage}
	}
	\subfigure[The end-to-end latency model for CQF. ]{
		\begin{minipage}[t]{0.3\textwidth}
			\centering
			\includegraphics[width=\textwidth]{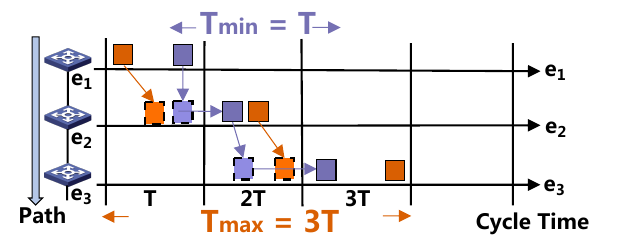}
			\label{fig2:CQF_cal}
		\end{minipage}
	}
	\subfigure[ CSQF instance. Three queues are rotated for transmission, each associated with a cycle.]{
		\begin{minipage}[t]{0.3\textwidth}
			\centering
			\includegraphics[width=\textwidth]{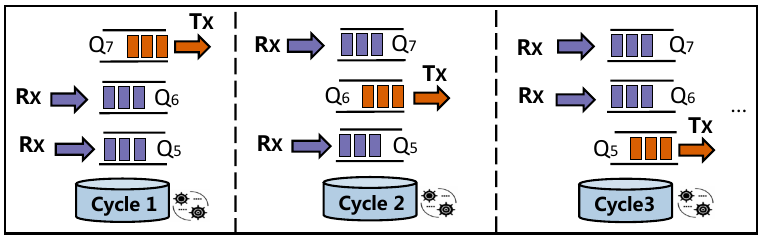}
			\label{fig2:CSQF}
		\end{minipage}
	}%
	
	\centering
	\caption{ Comparison of mechanisms evolving from local CQF to CSQF under long-distance Industrial IoT networks.
	}
\end{figure*}

With the efforts of the DetNet working group,  a cycle-based scheduling mechanism named CSQF is proposed. In CSQF, the key to achieve predictable latency and bounded jitter lies in the segment routing identifiers (SIDs)\footnote{The segment routing identifiers (SIDs) are equivalent to the cycle tags.}. Specifically, each SID is a label recording the packet's output port and sending cycle. For instance, in Fig. \ref{fig:background}, the SID of 4076 identifies that the packet x should be sent out from the cycle 6 and the port 7 of the hop 4 (node D). Several SIDs are combined into a label stack that indicates the dedicated forwarding path of the packet. Two data-plane instantiations of segment routing (SR) are SR over MPLS (SR-MPLS) and SR over IPv6 (SRv6). In practice, both SR-MPLS and SRv6 can be used for the SID of CSQF. The label of the sending cycle could be specified in the local segment layer, which is allocated from the Segment Routing Local Block (SRLB) according to RFC8402. 

The generation method and workflow of SIDs are as follows: (1) Once a talker at the source intends to issue time-sensitive traffic to a listener at the destination, it uploads the quality of service (QoS) requests to a centralized network controller. (2) The network devices learn from adjacent nodes to establish the cycle mapping tables. The controller calculates the feasible route and appropriate cycle parameters under the constraints of available spatial resources, temporal resources, and QoS requirements, and obtains the SID label stack corresponding to the path. (3) The control plane assigns the SIDs to the talker. Finally, CSQF-enabled network equipments reserve precise duration time for specified packets and forward them by parsing the topmost SID exhibited in the label stacks. 



Theoretically, assuming that each output port contains two cyclic queues and the flow is successfully scheduled, a simple but general calculation method for the maximum delay $D_{max}$ and the minimum delay $D_{min}$ is:

$$D_{max}= \sum_{i=1}^{h} (LD_{i}+PD_{i})+(h+1)T \text{,} \eqno{(1)}$$

$$D_{min}= \sum_{i=1}^{h} (LD_{i}+PD_{i})+(h-1)T \text{,} \eqno{(2)}$$

$$J_{e2e}= D_{max}-D_{min}=2T\text{,} \eqno{(3)}$$
where $LD$ is the link delay,  $PD$ is the process delay,  $h$ is the the number of hops, and $T$ is the cycle size. More importantly, since the packet can only fluctuate at the sending cycle of the first hop and the receiving cycle of the last hop, the end-to-end jitter $J_{e2e}$ is strictly limited to $2T$ regardless of network hops.

Furthermore, the segment routing (SR) used by CSQF is also a source routing technique that can implement the explicit route.  At the source node, all forwarding decisions towards the destination node have been carried in the SID labels. Hence, the intermediate and egress nodes do not need to maintain per-flow states. This characteristic enhances the scalability of CSQF, enabling it to schedule an extensive array of flows.


\section{Motivation}

This section presents the methods to construct the CSQF model and compute the global cycle tags.

\subsection{How to Model the CSQF?}

CSQF only specifies an underlying  packet scheduling mechanism on the data plane, which brings a significant problem of network-wide traffic planning to the control plane. To be more specific, how to construct the CSQF mathematical model and how to implement the SID calculation need to be solved. Next, We refer to the CQF in the range of local area to analyze the key points, i.e., link delays and cycle mapping relationships between adjacent nodes, in the process of modeling CSQF.


The features of local area include\cite{itp}: (1) The network diameter is restricted to a small scale, such as seven hops. (2) The time on switches  can be strictly synchronized. (3) The propagation latency can be ignored. Based on these features, CQF supposes that the cycle length exceeds the summation of transmission latency, link latency,  processing latency, and queueing latency.  As detailed in Fig. \ref{fig2:CQF},  the CQF-enabled switch has two queues, and each queue's behavior is controlled by the receive (Rx) and transmit (Tx) gates.  Determined by the gate control list (GCL), the two queues conduct enqueue and dequeue actions alternately. Thus, in a single cycle of $T$, the packet can be sent from the upstream device and received by the downstream device, then forwarded to the next-hop device in the subsequent cycle.	As depicted in Fig. \ref{fig2:CQF_cal}, if the packet is the last issued  and first arrived,  the minimum latency is $(H-1)\times T$, where $H$ is the number of hops. In contrast, a packet takes the maximum delay of $(H+1)\times T$ when it is the first sent and last received.

However, in wide-area networks, accurate time synchronization\cite{ieee8021as} is not easy to obtain and the varied propagation latency cannot be regarded as zero. Hence, it is not trivial to fix the per-hop latency to a  certain cycle time $T$. To address these challenges,  CSQF relaxes the synchronized time  to the frequency\cite{synce},  i.e., all switches maintain the same clock frequency but do not require the same initial phase, which introduces the initial phase deviation between devices. Then, CSQF enlarges the two queues of CQF to multiple queues, where one queue is for transmitting packets and the others for receiving packets. Fig. \ref{fig2:CSQF} depicts an instance of three-queues CSQF. Each queue corresponds to a cycle, and the queues dequeue packets in a periodic pattern. More than one receiving queues are utilized to absorb a certain amount of traffic bursts, but the mapping relationship between link delays and cyclic queuing behavior is still undefined.


Based on the above analysis, we propose the cycle tags planning (CTP), an integer programming model for the CSQF. First, we argue that switches can be aware of the initial phase deviation and update it to the controller. The cycle mapping pattern still holds if we consider the phase deviation is 0 \cite{joint_large1}. Thus, for simplicity but without loss of generality, CTP assumes an initial phase deviation of 0 in the control plane as the basis of the scheduling model. Second, our CTP model decouples link delays from the cycle time $T$ to hold a constant queue rotation duration. Specifically, the irregular link delay is divided by the cycle time and rounded up. Third, to unify the link delay and cyclic queuing behavior, we model the underlying network resources as time-discrete queue resource blocks.  An auxiliary variable named arriving cycle is utilized to map the arriving time of packets to an appropriate receiving queue, which ensures aligned adjacent cycles. More details of CTP model are in Section  IV. 

\subsection{How to Calculate the Cycle Tags?}

Another key challenge for CSQF is how to compute the cycle tags for each flow under the CTP model. Since multiple flows will compete for the queue resources of each output port along the path, an efficient scheduling algorithm is needed  to not only maximize the number of flows that meet the deadline, but also avoid breaking the maximum queue length constraints. In addition, for different network scenarios, how to set appropriate configuration parameters (such as the queue number and the cycle size) also needs to be investigated.


\begin{figure}[t]
	\centering
	\includegraphics[width=3.5in]{./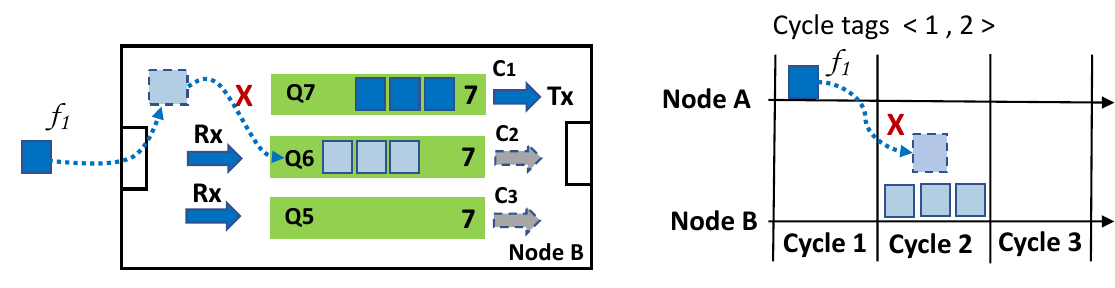}
	\caption{An access-aware scenario with the queue length of three packets. The node A is the access gateway device of the CSQF-enabled network domain. }
	\label{fig:guide_scenario}
\end{figure}

\begin{figure}[t]
	\centering
	\includegraphics[width=3.5in]{./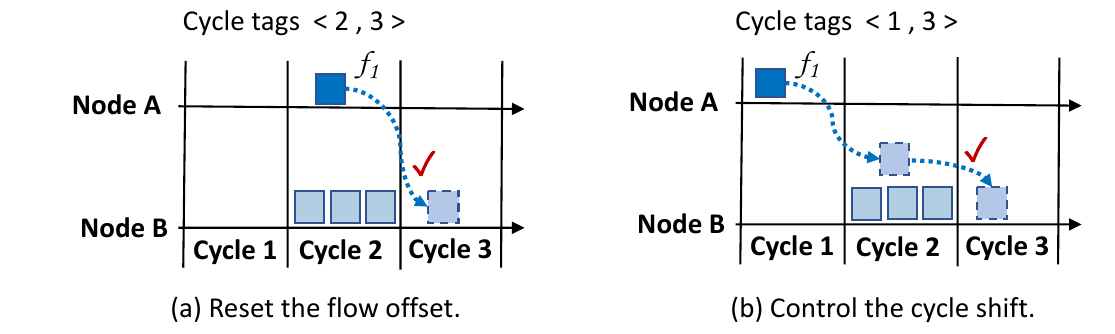}
	\caption{An example of the flow offset and cycle shift (FO-CS) method. The flow offset operation decides the packet's arrival time at the first hop. The cycle shift optimizes the selection of receiving queues at each node.  }
	\label{fig:algo_example}
\end{figure}

Specifically, there are two typical traffic access situations in real deployment scenarios: access-aware\cite{in_edge_control}\cite{towards}, and access-unaware. Access-aware means that access network devices are  gateways that can negotiate with terminal hosts and control when traffic will enter the CSQF-enabled network domain. Unawareness of the access indicates that the flow's arrival time of the first hop is uncontrollable. Thus, to make the cycle tags computation adapt to practical deployments, we propose two intuitive flow scheduling methods. One is to set the flow offset at the first hop (i.e., access-aware), and the other is to control the cycle shift at each receiving queue (i.e., access-unaware). 

To make it clear, we design a simple  access-aware scenario where the maximum queue capacity is limited to three packets. As shown in Fig. \ref{fig:guide_scenario}, Node A functions as the access gateway device, and there's a time-sensitive flow ($f_{1}$) directed towards the downstream node, Node B. It is not allowed to enqueue the receiving queue $Q_{6}$, because $Q_{6}$ is already at full capacity. One method, as demonstrated in Fig. \ref{fig:algo_example}(a), is to change the flow offset. By delaying the sending start time of flow $f_{1}$ at Node A by one cycle, the overflow at queue $Q_{6}$ can be circumvented. An alternative strategy involves adjusting the cycle shift, as seen in Fig. \ref{fig:algo_example}(b). Here, the flow $f_{1}$ is deferred to enqueue the queue $Q_{5}$. Consequently, both methods ensure that the flow $f_{1}$ is successfully scheduled and transmitted out during Cycle 3. The difference is that the flow offset operation can search over the entire sending period time series, while the cycle shift operation is limited to available queue numbers. Note that Node A can execute both flow offset and cycle shift operations if it is access-aware,  but can only execute the cycle shift operation if it is access-unaware. Moreover, if there is no available queue for choosing (e.g., all queues are full), the flow $f_{1}$ will fail to be scheduled. In this case, the flow needs to wait until some resources occupied by other flows are released, or the underlying network resources need to be expanded.



Inspired by the aforementioned exploration, our objective is to devise a heuristic flow scheduling algorithm called flow offset and cycle shift (FO-CS) that possesses the following characteristics:

\textbf{Predictable performance}: The algorithms should adhere to the constraints of bounded queue length and ensure time-sensitive flows reach their destination within the stipulated deadline and jitter.

\textbf{High scalability}: Our approach ought to exhibit high scalability, scheduling thousands of time-sensitive flows across long-distance industrial IoT networks.

\textbf{Feasible execution time}: Employing heuristic methods, our scheme aims for a practical algorithm execution time.

 The FO-CS algorithm are elaborated in Section  V.

\section{Cycle Tags Planning}
In this section, we detail the CTP\cite{csqf_huang} model, including the integer programming inputs, decision variables, and core constraints.

\subsection{CTP Model}
We denote sets using calligraphic letters (e.g., $\mathcal{V, E, F}$) and vectors with bold letters. The physical network topology is conceptualized as a graph $\mathcal{G}= \left \{  \mathcal{V},  \mathcal{E}\right \} $, where $\mathcal{V}$ represents vertices, encompassing switches $\mathcal{S}$  and hosts  $\mathcal{H}$. $\mathcal{E}$ symbolizes tuples indicative of network links, defined as  $\mathcal{E}= \left\{ (v_{i},v_{j}) \mid v_{i},v_{j}\in \mathcal{V}, i\neq j\right\}$, signifying a link exists between $v_{i}$ and $v_{j}$. Additionally, every edge possesses a link delay attribute of $LD$.

A loop-free path $p$ comprised of edges $e_{k}$ is depicted in \eqref{eq1}. Here,  $e_{1}$ signifies the edge associated with the output port of the first switch (hop) of the path, while $e_{j}$ is associated with the last switch:

\setcounter{equation}{3}

\begin{equation}
\begin{aligned}
&e_{k}\in p, \quad k \in \left\lbrace1,2, ..., j\right\rbrace, \quad p=\left \langle e_{1},e_{2},...,e_{j}  \right \rangle.
\end{aligned}
\label{eq1}
\end{equation}

A periodic TS flow is characterized as a single unicast traffic from a source node to a destination node. The collection of periodic TS flows is denoted as  $\mathcal{F}=\left\lbrace f_i \right\rbrace , i=\left\lbrace 0,...,n-1\right\rbrace$. The flow feature of each flow $f_i$ is defined as

\begin{equation}
f_i = \left\lbrace s_i, d_i, SP_i, PK_i, D_i, p_i, \Omega_i\right\rbrace, \quad \forall f_i \in \mathcal{F}
\end{equation}

\noindent where $s_{i}$ is the source node, $d_{i}$ is the  destination node, $SP_i$ is the flow's sending period\footnote{The flow's sending period is an inherent attribute determined by the industrial terminal. If a flow starts sending at 1.2 ms and the sending period is 4 ms, it will send packets at 1.2 ms, 5.2 ms, 9.2 ms, and so on.}, $PK_i$ is the packet number, $D_i$ denotes the specific time by which the flow should reach its destination. Given their per-flow granularity, both the path information $p_i$ and the adjustable flow offset $\Omega_i$ are incorporated into the flow's attributes. The basic unit of $\Omega_i$ is the cycle time.

\begin{figure}[tp]
	\centering
	\includegraphics[width=3.5in]{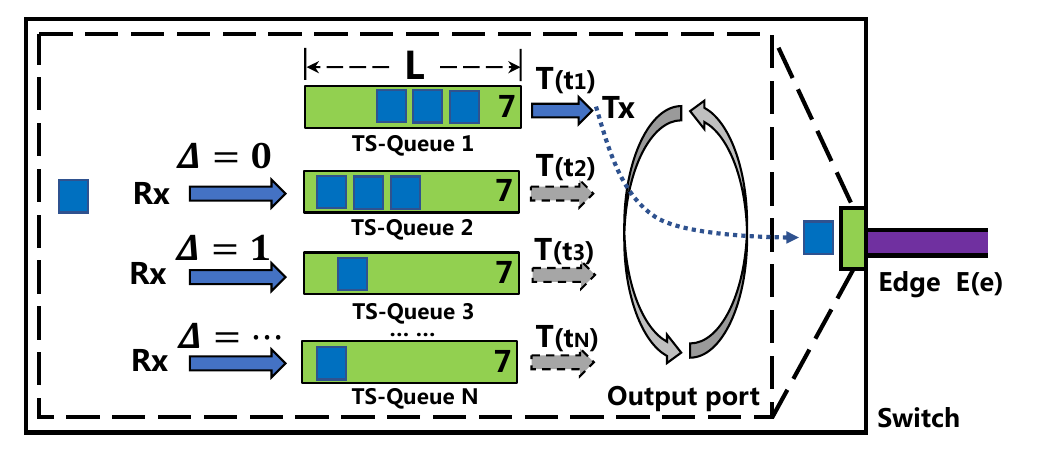}
	\caption{The CTP model from the perspective of the output port.}
	\label{fig:csqf_port}
\end{figure}

\begin{figure}[tp]
	\centering
	\includegraphics[width=3.5in]{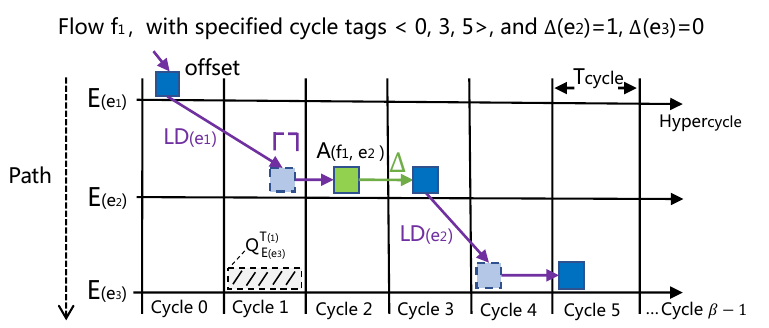}
	\caption{An illustration of CTP from a holistic flow perspective. The dashed gray block is an instance of queue resource blocks. For simplicity
		but without loss of generality, we draw frequency synchronization
		with the same initial phase (i.e., time synchronization). }
	\label{fig:csqf_flow}
\end{figure}

\begin{table}[]
	\caption{Graph related and flow related parameters}
	\centering
	\label{tab:graph related parameters}
	\setlength{\tabcolsep}{2mm}{
		\begin{tabular}{ll}
			\toprule
			$ \mathcal{E} \subset \mathbb{N}$&set of edges \\
			$ \mathcal{V} \subset \mathbb{N}$&set of vertexes\\
			$ \mathcal{F}\subset \mathbb{N}$&set of flow indexes \\
			$ s_{i}\in \mathcal{V}^{\mid F \mid}$& \tabincell{l}{source of flow $f_{i}$}\\
			$ {d}_{i}\in \mathcal{V}^{\mid F \mid}$& \tabincell{l}{destination of flow $f_{i}$}\\
			$ {SP}_{i}\in \mathbb{N}^{\mid F \mid}$& \tabincell{l}{sending period of flow $f_{i}$}\\
			$ PK_{i}\in \mathbb{N}^{\mid F \mid}$& \tabincell{l}{packet number of flow $f_{i}$}\\
			$ {D}_{i}\in \mathbb{N}^{\mid F \mid}$& \tabincell{l}{ deadline of flow $f_{i}$}\\
			${p}_{i}\in \mathcal{P}^{\mid F \mid}$& \tabincell{l}{ path of flow $f_{i}$}\\
			$ \Omega_i\in \mathbb{N}^{\mid F \mid}$& \tabincell{l}{ flow offset of flow $f_{i}$}\\
			$  \Delta_{(f_{i}, e_{k})}$& \tabincell{l}{ cycle shift of flow $f_{i}$ at edge $e_{k}$}\\
			$ LD_{(f_{i}, h)}$& \tabincell{l}{ flow $f_{i}$ traverses the $h$ hop of a path\\with link delay of $ LD_{(f_{i}, h)}$}\\
			$  Q_{E_{(e_k)}}^{T_{(t)}}$& \tabincell{l}{ resource block at edge $E_{(e_k)}$ and cycle $T_{(t)}$}\\
			$  L\in \mathbb{N}$& \tabincell{l}{ maximum queue length}\\
			$  N\in \mathbb{N}$& \tabincell{l}{ maximum queue number}\\
			$  T_{cycle}\in \mathbb{N}$& \tabincell{l}{ cycle size at the output port}\\
			$Hyper_{cycle} $& \tabincell{l}{hyper-cycle, which is equal to the least\\ common multiple of all flows' sending periods}\\
			$  \beta$& \tabincell{l}{ the number of  $T_{cycle}$ in a $Hyper_{cycle}$ }\\
			\bottomrule
	\end{tabular}}
\end{table} 

The transmission time of each output interface is segmented into a series of equal time intervals, each lasting a duration of $T_{cycle}$, commonly referred to as a cycle. Each cycle corresponds to a queue. Thus, each output port consists of a sequence of cycle-related queue resource blocks that can be defined as $Q_{E_{(e_k)}}^{T_{(t)}}$, and the superscript of $T_{(t)}$ is a variable that denotes the specific location of the cycle in the sequence. Specifically, $t \in \left \{0,1,  ... , \beta-1 \right \}$, and  $T_{(1)}$ denotes the  Cycle 1 as exhibited in Fig. \ref{fig:csqf_flow}.  The parameters related to the graph and flow, which are used in the CTP problem formulation, are presented in Table \ref{tab:graph related parameters}.

 As illustrated in Fig. \ref{fig:csqf_port}, the buffer allocated for TS traffic is partitioned into $N$ TS-queues. Of these, one queue is designated for transmission while the other $N-1$ queues are for receiving. The buffer is used to store the packet data, while the metadata of packets is stored in the queue. Each of the TS-queues is assigned the topmost priority level of seven. As per the CSQF mechanism, the queue for transmission is chosen cyclically.  In addition, the queue capacity $L$ is constrained and correlates with the size of  $T_{cycle}$. A flow must determine the appropriate receiving TS-queue for enqueueing, upon arrival at an interface (edge). The variable $\Delta_{(f_{i},e_{k})}$ represents the distance between the chosen receiving queue and the transmitting queue, measured by cycle shift operations at the associated edge $e_{k}$. Typically, the $\Delta$ is an integer value that ranges from $0$ to $N-2$.

Fig. \ref{fig:csqf_flow} presents an illustration of the CTP model from a comprehensive flow perspective. Flow $f_{1}$ initiates transmission at the first hop, specifically edge $e_{1}$, during the flow offset of Cycle 0. Following a link latency equivalent to nearly two cycles, $f_{1}$ arrives at the edge $e_{2}$  in the Cycle 2. Then, it enqueues the second receiving queue, thus the  value of $\Delta(f_{1}, e_{2})$ is one cycle and $f_{1}$ will be transmitted at the edge $e_{2}$ in the Cycle 3. After that, it is received at the edge $e_{3}$ with $\Delta(f_{1}, e_{3})$ of zero and transmitted in the Cycle 5. In its entirety, $f_{1}$ traverses through the path $\left \langle e_{1}, e_{2}, e_{3} \right \rangle $ adhering to the designated cycle tags $\left \langle0, 3, 5  \right \rangle $. 

\subsection{Integer Programming Inputs and Variables}

Integer programming is a widely recognized mathematical method used to model network scheduling problems. The following are the inputs and decision variables.

1) Integer Programming Inputs: 

$\bullet$ Physical topology graph $\mathcal{G}=\left \{ \mathcal{V}, \mathcal{E} \right \}$.

$\bullet$ A collection of periodic time sensitive flows $\mathcal{F}$.

$\bullet$ Queue length $L$, queue number $N$, cycle size $T_{cycle}$.

2) Decision Variables:

$\bullet$ Flow offset, $\left \{ \Omega_i \mid  \Omega_i \in \Omega \right \}$.

$\bullet$ Cycle shift, $\left \{ \Delta_{(f_{i}, e_{k})} \mid f_{i} \in F, e_{k} \in p_i\right \}$.

\subsection{Core Constraints}

 \textit{1)  Hyper-cycle Constraint:}  Various flows have different sending periods. A hyper-cycle is used to determine the time series bounds for the scheduling solution. As shown in \eqref{eq3},  the hyper-cycle is defined as the least common multiple of all flows' sending periods ($\mathcal{SP}$). Thus, the sending pattern repeats from the global perspective, and we only need to make the flow satisfy the scheduling constraints in one hyper-cycle. The hyper-cycle is a widely-used term in real-time system fields\cite{itp}\cite{qbv_toc}. In the same vein, this paper focuses on scheduling periodic TS flows, where the sending period is considered as an integer within a reasonable range to comply with industrial tasks. The sporadic flows can be delivered with other QoS methods, such as asynchronous traffic shaping and frame preemption. Moreover, in specific industrial scenarios, such as 5G TSCAI\cite{9212141}, IEC 60802\cite{traffic_type_mapping}, and DetNet use cases\cite{use_case}, it is possible to register all types of devices and the periods of flows in advance. Consequently, the hyper-cycle can be established as the greatest value of the least common multiple derived from the flow periods of all known devices.

\begin{equation}
\begin{aligned}
Hyper_{cycle} = LCM(\mathcal{SP}). 
\end{aligned}
\label{eq3}
\end{equation}

 \textit{2)  Cycle Constraint:} In this study, we define the granularity of scheduling by the cycle time. This mandates that every sending period must be divisible by the cycle time. Consequently, the upper bound of  $T_{cycle}$ aligns with the greatest common divisor of the sending periods of all flows, as depicted in  \eqref{eq4}. On the other hand, the lower bound of $T_{cycle}$ must ensure that during its duration, all packets stored in a queue are dispatched. This duration encompasses the cumulative value of the processing latency $\delta_{p}$, maximum queuing latency $\delta_{q}$, and transmission latency $\delta_{m}$,  , as represented in \eqref{eq5}.

\begin{equation}
\begin{aligned}
&\max(T_{cycle})=GCD(\mathcal{SP}).&
\end{aligned}
\label{eq4}
\end{equation}

\begin{equation}
\begin{aligned}
\min(T_{cycle})=&\lceil \delta_{p} +\max \left (  \delta_{q} + \delta_{m}\right )\rceil  \\
=&\lceil  \delta_{p}  + \frac{L \times MTU}{Bw} \rceil . 
\end{aligned}
\label{eq5}
\end{equation} 

In (8), $Bw$ is the link capacity,  $MTU$ is the maximum transmission unit of packet size, and $L$ is the maximum queue length. Typically, the value of $T_{cycle}$ is chosen to be greater than its minimum, and $Hyper_{cycle}$ is divisible by $T_{cycle}$.  Then, $\beta$ is utilized to denote the total count of  $T_{cycle}$ in a $Hyper_{cycle}$ as presented in \eqref{eq6}.
\begin{equation}
\begin{aligned}
&\beta = \frac{Hyper_{cycle}}{T_{cycle}}.&
\end{aligned}
\label{eq6}
\end{equation}

 \textit{3) Offset Constraint:} The offset of each flow determines the sending initial cycle at the first switch device ($ \Omega_i$) linked to the terminal node, which must be less than the defined sending period of the flow, as outlined in \eqref{eq7}. In other words, the TS flow should be entirely forwarded  within the current sending period to avoid stream interference that may occur during adjacent sending periods.
\begin{equation}
\begin{aligned}
\forall i &\in \left \{ 0,1,2, ..., n-1 \right \},  \  \ 0\le \Omega_i < \frac{SP_{i}}{T_{cycle}} . 
\end{aligned}
\label{eq7}
\end{equation}

 \textit{4) Arriving Cycle Constraint:} As defined in \eqref{eq8}, $A_{(f_{i},e_{k})}$ denotes the arriving cycle when the flow is arriving at the output port of a node. $t_{A_{(f_{i},e_{k})}} $ is the specific time point that packets may arrive. This constraint is satisfied by default if the cycle size is set to no less than the minimal value of $T_{cycle}$ defined in \eqref{eq5}. 
 \begin{equation}
\begin{aligned}
\forall & i \in \left \{ 0,1,2, ..., n-1 \right \},\  \forall  e_{k} \in p_{i}, k \in \left \{ 1,2, ..., j \right \},\\
 &A_{(f_{i},e_{k})}=\Omega_i  + {\sum_{h = 0}^{k-1}}\left \lceil  \frac{ LD_{(f_{i}, h)}}{T_{cycle}} \right \rceil + { \sum_{h= 0}^{k-1}}\Delta_{(f_{i}, h)}, \\
 & A_{(f_{i},e_{k})} \times T_{cycle} \leq t_{A_{(f_{i},e_{k})}} \leq (A_{(f_{i},e_{k})} +1) \times T_{cycle}.
\end{aligned} 
\label{eq8}
\end{equation}

and $ LD_{(f_{i}, 0)} = 0$ ,  $ \Delta_{(f_{i}, 0)} = 0$.

 \textit{5) Deadline Constraint:} For each flow, all packets need to reach their destination within the designated deadline, i.e.,
\begin{equation}
\begin{aligned}
\forall  i &\in \left \{ 0,1,2, ..., n-1 \right \},\\
&1 + {\sum_{h=1}^{j}}\left \lceil  \frac{ LD_{(f_{i}, h)}}{T_{cycle}} \right \rceil + { \sum_{h=1}^{j}}\Delta_{(f_{i}, h)}\le \frac{D_{i}}{T_{cycle}}. 
\end{aligned}
\label{eq9}
\end{equation} 

 \textit{6) Flow and Edge-Cycle Mapping Constraint:} This constraint gives the mapping relationship between flow $f_{i}$ and queue resource block $Q_{E_{(e_{k})}}^{T_{(t)}}$ at the edge $e_{k}$ and cycle $t$. If a flow $f_{i}$ maps to a  $Q_{E_{(e_{k})}}^{T_{(t)}}$ successfully, i.e., the utilization of queue resource block does not exceed the queue length $L$, the value of the auxiliary variable $M\left ( f_{i}, Q_{E_{(e_{k})}}^{T_{(t)}} \right) $ is 1, as depicted in \eqref{eq10}. Otherwise, it is 0. 
 
\begin{equation}
\begin{aligned}
\forall & i \in \left \{ 0,1,2, ..., n-1 \right \},\\
\forall &e_{k}\in p_{i}, k \in \left \{ 1,2, ..., j \right \}, \forall t \in \left \{0,1,  ... , \beta-1 \right \},\\
&M\left ( f_{i}, Q_{E_{(e_{k})}}^{T_{(t)}} \right) = 1.\\
s.t.\\
&\lambda \in \left \{0,... ,  \frac{Hyper_{cycle}}{SP_{i}} - 1 \right \},  \\
&t =  \left (A_{(f_{i},e_{k})} + \Delta_{(f_{i}, e_{k})} + \frac{\lambda \times SP_{i}}{T_{cycle}}\right  ) \%  \beta. 
\end{aligned}
\label{eq10}
\end{equation} 

The actual transmitting cycle at the edge is determined by the sum of the arriving cycle $A_{(f_{i},e_{k})}$ and the cycle shift $ \Delta_{(f_{i}, e_{k})}$. Given that a flow can be issued from its source multiple times within a $Hyper_{cycle}$, the variable  $\lambda$ specifies the sending period to which the flow pertains. Moreover, to ensure that the arriving time of flows does not exceed a $Hyper_{cycle}$,  the transmitting cycle is subjected to a modulus operation with $\beta$.  In summary, the variable of $t$ encapsulates all the transmitting cycles a flow undergoes along its path within the range of a hyper-cycle.


 \textit{7) Bounded Queue Length Constraint:} As defined in  \eqref{eq111}, if a flow $f_{i}$ is scheduled along the path successfully, i.e., for all the edges along the path the value $M\left ( f_{i}, Q_{E_{(e_{k})}}^{T_{(t)}} \right) $ is 1 while the deadline constraint and bounded queue length constraint are satisfied, then the value of $S_{i}$ is 1. We call this kind of flow a schedulable flow. Otherwise, it is 0.
 
  \begin{equation}
 \begin{aligned}
S_{i}=\left\{\begin{matrix}
1, & \text{if $f_{i}$ is a schedulable flow}  \\ 
0, & \text{otherwise}  
\end{matrix}\right.\text{.} 
 \end{aligned}
 \label{eq111}
 \end{equation}
 
The constraint on bounded queue length necessitates that the cumulative number of packets from all flows in any given queue not surpass the designated queue capacity, i.e.,
 \begin{equation}
\begin{aligned}
\forall & i \in \left \{ 0,1,2, ..., n-1 \right \},\\
\forall &e_{k}\in p_{i}, k \in \left \{ 1,2, ..., j \right \},\forall t \in \left \{0,1, ... ,  \beta-1 \right \},\\
&\sum_{i = 0}^{n-1}S_{i} \times M ( f_{i}, Q_{E_{(e_{k})}} ^{T_{(t)}} ) \times PK_{i} \leq L.
\end{aligned}
\label{eq11}
\end{equation}

In this paper, the optimization objective is to maximize the number of schedulable TS flows, which is described as:


\begin{equation}
\mathbf{CTP:} \quad \max_{\mathbf{\Omega},  \mathbf{\Delta} } \ \sum_{i=0}^{n-1} S_i \qquad \quad
\end{equation}

\begin{equation*}
\mathbf{s.t.}  \quad(6),(7),(8), (10), (12),(13),(14),(15) 
\end{equation*}
where $\mathbf{\Omega}$ is the variable of flow offsets ( $\mathbf{\Omega} = \left[ \Omega_0, ..., \Omega_{n-1} \right]^T $), and $\mathbf{\Delta}$ is the variable of cycle shifts of flows ($\mathbf{\ \Delta} = \left[ \mathbf{\Delta}_0, ...,  \mathbf{\Delta}_{n-1} \right]^T $,  $\mathbf{\ \Delta}_{i} = \left[ \Delta_{i,1}, ...,  \Delta_{i,j} \right]^T$).

\subsection{Schedulability Analysis}
In this paper, schedulability refers to the number of time-sensitive flows that could be scheduled by the network under constrained network resources and desired quality of service. Many factors such as traffic characteristics, flow sequence, and topology size, may affect scheduling capabilities. One of the most important factors is the configurable parameter of queue length $L$. In (15),  it seems that as $L$ becomes larger, the number of packets each queue can hold increases, so the number of schedulable flows $\sum_{i = 0}^{n-1}S_{i}$ improves (i.e., $\sum_{i = 0}^{n-1}S_{i}\propto L$). Instead, we derive Theorem 1:

\begin{theorem}
When the total network resources are constant (due to limited on-chip memory), the schedulability increases as the queue length $L$ becomes smaller, i.e.,
 \begin{equation}
\begin{aligned}
\sum_{i = 0}^{n-1}S_{i}\propto \frac{1}{L}.
\end{aligned}
\label{eq13}
\end{equation}
\end{theorem}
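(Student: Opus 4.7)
The plan is to prove Theorem 1 by arguing that, under a fixed on-chip memory budget, reducing $L$ strictly enlarges the decision space of the CTP model and thereby allows at least as many flows to be scheduled. First I would formalize the ``total network resources are constant'' hypothesis by setting $L \cdot N = C$, where $C$ denotes the constant number of packet slots per output port determined by the chip's on-port memory divided by the MTU. This converts the isolated parameter $L$ into a genuine tradeoff between queue depth and queue count $N$.

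Next, I would chain the effect of $L$ through the equations already established in the CTP model. From (8), the attainable minimum cycle size decreases linearly in $L$, namely $\min(T_{cycle}) = \delta_p + L \cdot MTU / Bw$; substituting into (9) gives $\beta = Hyper_{cycle} / T_{cycle}$, which grows as $L$ shrinks. As a consequence, (i) the range of admissible flow offsets $\Omega_i \in \{0, \ldots, SP_i / T_{cycle} - 1\}$ in (10) grows; (ii) the range of cycle shifts $\Delta_{(f_i, e_k)} \in \{0, \ldots, N-2\}$ grows because $N = C/L$ grows; and (iii) the count of queue resource blocks $Q_{E_{(e_k)}}^{T_{(t)}}$ per hyper-cycle grows. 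These three effects together enlarge the combinatorial domain over which the objective $\max \sum_i S_i$ is optimized.

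I would then verify that the per-queue capacity constraint (15) does not offset these gains. Each periodic flow still injects the same total number of packets per hyper-cycle, so with smaller $L$ and proportionally more resource blocks the same traffic volume is spread across more slots and the per-slot occupancy decreases commensurately. Concretely, the aggregate per-port capacity $\beta \cdot L = Hyper_{cycle} \cdot L / T_{cycle}$ approaches a bandwidth-limited constant while the spatial and temporal granularity of scheduling becomes strictly finer. A feasibility-preservation argument then completes the reasoning: any schedule feasible at queue length $L$ can be embedded into the finer-grained domain induced by $L' < L$ by splitting each occupied block into the corresponding finer blocks, so the optimum of the CTP is nonincreasing in $L$.

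The main obstacle will be making the symbol $\propto$ rigorous, because the CTP is a combinatorial program whose optimum is not literally a continuous function of $L$. I would therefore restate the theorem as a weak monotonicity claim, with strict improvement asserted only when constraint (15) is active for some flow in the larger-$L$ solution. A secondary subtlety is that once $L < PK_i$ for some flow $f_i$, that flow becomes unschedulable by (15); the proof must either exclude this corner case or restrict the claim to the regime $L \geq \max_i PK_i$.
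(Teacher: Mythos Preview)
Your proposal follows essentially the same causal chain as the paper: $L$ controls $\min(T_{cycle})$ via (8), a smaller $T_{cycle}$ enlarges the decision space of offsets and cycle shifts, and a larger decision space yields more schedulable flows. The paper's argument is in fact terser than yours: it simply writes the search space as $\prod_i \frac{SP_i}{T_{cycle}}\times (N-1)^n$, asserts $\sum_i S_i \propto \sum_i M(f_i,Q_{E_{(e_k)}}^{T_{(t)}}) \propto 1/T_{cycle}$, pairs that with $T_{cycle}\propto L$ from (8), and concludes. It does not formalize the fixed-memory hypothesis as $L\cdot N=C$, does not give a feasibility-embedding argument, and does not discuss the $L<PK_i$ corner case or the informality of the $\propto$ symbol.

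So your route is the same in spirit but strictly more careful. Two remarks on the extra rigor you attempt. First, your embedding step (``split each occupied block at $L$ into the corresponding finer blocks at $L'$'') is not as immediate as stated: changing $T_{cycle}$ alters the rounded link delays $\lceil LD/T_{cycle}\rceil$ in (11), and packets from distinct flows that shared one block at $L$ must be \emph{separately} re-offset or re-shifted to fit into the $L'$-capacity blocks, which is a rescheduling argument rather than a pure refinement of a fixed schedule. Second, your instinct to downgrade $\propto$ to weak monotonicity is sound; the paper leaves $\propto$ at the heuristic level and follows up with Corollary~1 (the $L=1$ extreme relaxes to frame-isolation constraints) rather than a formal monotonicity statement.
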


\begin{proof}
	In (13), the  total network resources are the number of resource blocks $Q_{E_{(e_{k})}} ^{T_{(t)}} $ with the cycle $T_{(t)}$ and edge $E_{(e_{k})}$. For the flow mapping $M ( f_{i}, Q_{E_{(e_{k})}} ^{T_{(t)}} )$,  the search space is $\prod_{i=0}^{n-1}\frac{SP_{i}}{T_{cycle}}\times (N-1)^n$, where $n$ is the number of flows and $N$ is the queue number. Thus, we have:

 \begin{equation}
\begin{aligned}
\sum_{i = 0}^{n-1}S_{i}\propto \sum_{i = 0}^{n-1}M ( f_{i}, Q_{E_{(e_{k})}} ^{T_{(t)}} ) \propto \frac{1}{T_{cycle}}.
\end{aligned}
\label{eq14}
\end{equation}
In (8),  there is:

 \begin{equation}
\begin{aligned}
T_{cycle} \geq \delta_{p}  + \frac{L \times MTU}{Bw}  \propto L.
\end{aligned}
\label{eq15}
\end{equation} 
Hence, derived from (18) and (19),  there is $\sum_{i = 0}^{n-1}S_{i}\propto \frac{1}{L}$. Theorem 1 is proofed. 
\end{proof}

\begin{corollary}
	When the queue length $L$ is set to the minimum value of 1, (13)-(15) can be relaxed to the well-investigated frame isolation constraints \cite{qbv_toc}, in which the schedulability is maximized.
\end{corollary}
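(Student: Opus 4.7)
The plan is to verify the corollary in two steps: first showing the algebraic reduction of constraints (13)--(15) under $L=1$, then invoking Theorem 1 to conclude that $L=1$ maximizes schedulability among all admissible queue-length settings.

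First, I would substitute $L=1$ directly into the bounded queue length constraint (15). Since each $PK_i \geq 1$ and $S_i, M(\cdot) \in \{0,1\}$, the inequality $\sum_{i} S_i \cdot M(f_i, Q_{E_{(e_k)}}^{T_{(t)}}) \cdot PK_i \leq 1$ forces at most one scheduled flow to map to any given queue resource block at any cycle, and only when its packet number equals one. This is precisely the frame isolation property used in the Qbv/CQF literature cited as \cite{qbv_toc}: no two packets ever occupy the same cycle at the same output port. Under this reduction, (13) and (14) retain their structural role of mapping flows to blocks and flagging schedulability, but they no longer require the packet-count summation in (15); the system collapses to the canonical per-block binary exclusion constraint. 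I would make this explicit by rewriting (15) as a disjunction over flows sharing a block and noting that it is functionally equivalent to the frame isolation inequality $\sum_i S_i \cdot M(f_i, Q_{E_{(e_k)}}^{T_{(t)}}) \leq 1$.

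Second, I would appeal to Theorem 1, which establishes $\sum_{i=0}^{n-1} S_i \propto 1/L$ under fixed total network resources. Since $L$ is a positive integer with lower bound one (a queue must hold at least one packet to be useful), the maximum of $1/L$ is attained at $L=1$, so the schedulability $\sum_i S_i$ is maximized at this extremal setting. Combined with the reduction in the previous step, this simultaneously establishes both assertions of the corollary: the relaxation to frame isolation constraints and the maximality of schedulability.

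The main obstacle I anticipate is not the algebra but the precise correspondence between constraint (15) at $L=1$ and the canonical frame isolation constraints of \cite{qbv_toc}, because the latter are typically expressed per time slot and per port without an explicit hyper-cycle modulus. I would reconcile the two formulations by expanding the index $t = (A_{(f_i,e_k)} + \Delta_{(f_i,e_k)} + \lambda SP_i / T_{cycle}) \bmod \beta$ in (13) and showing that instantiating each $\lambda \in \{0, \dots, Hyper_{cycle}/SP_i - 1\}$ separately recovers one frame isolation inequality per physical slot. A short argument that $PK_i = 1$ under $L=1$ forces unit occupancy per block, together with the surjectivity of the modular map onto $\{0, \dots, \beta-1\}$, should close this gap and complete the proof.
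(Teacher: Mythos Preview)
Your approach differs from the paper's in two places. First, your reduction of (15) under $L=1$ to the per-block exclusion $\sum_i S_i\, M(f_i, Q_{E_{(e_k)}}^{T_{(t)}}) \leq 1$ is correct and matches the paper's intermediate pairwise disjunction over discrete cycles (its equation (20)). But the paper does not stop there: it performs a second relaxation, replacing cycle indices by continuous arrival time points $t_{A_{(f_i,e_k)}}$ and per-frame transmission durations $\Theta_i$, arriving at a continuous-time disjunction (21). It is (21), not the discrete cycle-level version, that the paper identifies with the frame isolation constraints of \cite{qbv_toc}. Your anticipated obstacle about the hyper-cycle modulus is thus only part of the story; the discrete-to-continuous passage is the step you are missing.

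Second, for maximality you invoke Theorem~1 ($\sum_i S_i \propto 1/L$), whereas the paper argues directly from (21): in the continuous model each frame occupies only its own transmission duration $\Theta_i$, which can be strictly smaller than a full $T_{cycle}$, so no slot time is wasted and scheduling capacity is maximal. Your argument is shorter and legitimately ranks $L=1$ as best among CTP configurations, but it does not by itself explain why the \emph{relaxed} continuous model achieves the maximum, which is what the corollary actually asserts. The paper's argument supplies that justification.
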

 
\begin{proof}
When $L$ takes the minimum value  of 1, it means that each cycle $T_{(t)}$ of each edge $E_{(e_k)}$ can only store one packet. We assume that the packet number of each flow (i.e., $PK_{i}$) is no more than one. Then, the bounded queue length constraints in (13)-(15) will be relaxed as follows:

$\forall f_{i}, f_{j}\in \mathcal{F},  \ e_{k}= (p_{i}\cap p_{j}) \in \mathcal{E}, i\neq j,$

$\forall x\in \left \{0, ...,  \frac{Hyper_{cycle}}{SP_{i}} - 1 \right\},$ 

$\forall   y\in \left \{ 0, ..., \frac{Hyper_{cycle}}{SP_{j}} - 1 \right\}:$ 

$$( A_{(f_{i},e_{k})}  +  \Delta_{(f_{i}, e_{k})} +  x \cdot  \frac{SP_{i}}{T_{cycle}} \geq A_{(f_{j},e_{k})}  +  \Delta_{(f_{j}, e_{k})} +  y \cdot  \frac{SP_{j}}{T_{cycle}} ) \vee$$

$$   ( A_{(f_{j},e_{k})}  +  \Delta_{(f_{j}, e_{k})} +  y \cdot  \frac{SP_{j}}{T_{cycle}} \geq A_{(f_{i},e_{k})}  +  \Delta_{(f_{i}, e_{k})} +  x \cdot  \frac{SP_{i}}{T_{cycle}} ) \eqno{(20)} 
$$

Equation (20) indicates that for any two flows $f_i$ and $f_j$ in any sending period, there are two ways to avoid scheduling conflicts. The first is that the transmission time slot of flow $f_i$ is after flow $f_j$, i.e., the left side of $\vee$. The second is that the transmission time slot of flow $f_i$ is before flow $f_j$, i.e., the right side of $\vee$.  Then, each flow independently occupies a cycle that can transimit an MTU-sized Ethernet frame. Since the packet size of flows (e.g., 100 Bytes) may be smaller than the MTU size, (20) still causes the waste of underlying time slot resources. Further, we assume the specific transmission duration of a flow $f_{i}$ is $\Theta _{i}$ and $t_{A_{(f_{i},e_{k})}} $ is the specific time point that packet may arrive. (20) is converted to:

$\forall f_{i}, f_{j}\in \mathcal{F},  \ e_{k}= (p_{i}\cap p_{j}) \in \mathcal{E}, i\neq j,$

$\forall x\in \left \{0, ...,  \frac{Hyper_{cycle}}{SP_{i}} - 1 \right\},$ 

$\forall   y\in \left \{ 0, ..., \frac{Hyper_{cycle}}{SP_{j}} - 1 \right\}:$ 

$$(t_{A_{(f_{i},e_{k})}}  +  x \cdot  SP_{i} \geq t_{A_{(f_{j},e_{k})}} +\Theta _{j}+  y \cdot  SP_{j} ) \vee$$

$$   ( t_{A_{(f_{j},e_{k})}}  +  y \cdot SP_{j} \geq t_{A_{(f_{i},e_{k})}} +\Theta _{i} +  x \cdot  SP_{i} ) \eqno{(21)} 
$$

Equation (21) is the frame isolation constraints \cite{qbv_toc}, in which the discrete equal cycles are converted to continuous time resources. Each frame occupies a minimum time slot resource that is the same as its transmission duration. Hence, the scheduling capacity is maximized.
\end{proof}

We adopt the bounded queue length constraints in (15) instead of frame isolation constraints in (21)  for the following reasons. First,  unlike LANs, it is hard to obtain fine-grained traffic characteristics (such as packet size or transmission duration) for all Industrial IoT flows in wide-area scheduling. Second, there is inevitable traffic aggregation in wide-area networks that causes queuing.  Finally, the execution time of the solution grows exponentially as the queue length gets smaller. Thus, the queue length $L$ should be set to an appropriate value to balance the schedulability and time costs.

\section{Algorithm Design}
This section elaborates on the implementation of cycle tags computation, including the path selection, the basic flow scheduling, and the enhanced flow sequence algorithm.

\subsection{Design Guidelines}
The most naive method to calculate the SID information for TS flows is formulating multiple constraints and inputting the flows' parameters to the integer programming solver. However, this method is memory-intensive and may cost several hours to obtain a premium scheme for long-distance Industrial IoT scenarios.  Fundamentally, the scheduling problem parallels the bin packing problem, which is NP-hard\cite{joint_large1}. Hence, obtaining the optimal solution within a practical timeframe might be unattainable.

To reduce the complexity and improve the schedulability, we opt for computation-intensive heuristic methods to address this issue. Similar to the idea in \cite{model_tabu}, we split the CTP problem into a basic flow scheduling problem and an enhanced flow sequencing problem. In incremental online scheduling scenarios, the basic flow scheduling algorithm targets to quickly compute cycle tags for an ordered set of flows. For offline optimization contexts, the enhanced flow sequencing algorithm, which leverages the meta-heuristic Tabu search, focuses on establishing a complete order of the flow set to optimize the count of schedulable flows.


\IncMargin{1em}
\begin{algorithm}[tp]
	\caption{FO-CS Algorithm}
	\label{algorithm:FOCS}
	\LinesNumbered
	\KwIn{Flow set: $\mathcal{F}$, Edge-cycle resource: $Q_{E_{(e_{k})}} ^{T(t)}$ }
	\KwOut{Scheduled flow set $\mathcal{F}_{suc}$}
	\For{ $f_{i}$ in $\mathcal{F}$ }{
		$init  \ \forall \ \Omega_{i} = 0,  \mathbf\Delta_{i}=0$\;
		$ Init\_result (f_{i}, Q_{E_{(e_{k})}} ^{T(t)})$ according to (12)-(15)\;
			\If{$S(f_{i})=1$}{
			$ F_{suc} \leftarrow f_{i} $\;
			Break\;
		}
		\Else{
			
		\While{$ \Omega_{i} <   \frac{SP_{i}}{T_{cycle}}$}{ 
			\For{ $e_{k}$ in $p_{i}$ and $M ( f_{i}, Q_{E_{(e_{k})}} ^{T(t)} )=0$ }{
				\While{$\Delta_{(f_{i}, e_{k})} \leqslant N-2 $ }{   
					$ \Delta_{(f_{i}, e_{k})}++$,
					$Compute(f_{i}, Q_{E_{(e_{k})}} ^{T(t)})$ according to (\ref{eq10})  \tcc*{ Cycle Shift Part}
					\If{$M ( f_{i}, Q_{E_{(e_{k})}} ^{T(t)} )=1$}{
						Break\;}
				}
			}
			\If{$S(f_{i})=1$}{
				$ F_{suc} \leftarrow f_{i} $\;
				Break\;
			}
			\Else{
				$ \Omega_{i} ++$   \tcc*{ Flow Offset Part}}
	
		$ F_{fai} \leftarrow f_{i} $\;
	}
}
	}
	
	return  $\Omega_{i},  \mathbf\Delta_{i},  \mathcal{F}_{suc}$\;
\end{algorithm}
\DecMargin{1em}

\subsection{Path Selection}
Generally, periodic TS flows account for a relatively small portion of bandwidth (e.g., 7\% for scheduled flows in literature\cite{URLLC} and no more than 20\% for critical control streams in literature \cite{use_case}) compared to BE flows. For example, the differential protection traffic is just 64 kbps [4], but requires millisecond-level delay and microsecond-level jitter. Thus, to meet such stringent delay demands, the link delay weighs more than the bandwidth. In this study, we focus on CSQF scheduling and employ a weighted shortest path routing approach based on Dijkstra's algorithm. The  link delay is denoted as the weight of each edge, and the path with the smallest end-to-end weight is selected as the routing path. The time complexity is $O(\left | V \right | \log \left | V \right | + \left | E \right |)$, where $\left | V \right |$ is the number of vertexes and $\left | E \right |$ is the number of edges. Moreover, plenty of works have been done on the realization of routing for time-sensitive networks\cite{incre_tssdn}, such as Equal Cost Multi-Pathing (ECMP), Maximum Scheduled Traffic Load (MSTL), and segment routing. A load balancing based on CSQF is presented in \cite{load_balancing_csqf}, which can be applied to reduce the single point of congestion with additional complexity. Routing optimization with heuristic algorithms will be considered as future work.

\subsection{Basic Flow Scheduling Algorithm}

Based on the insights from Section III.C, our basic scheduling algorithm encompasses the cycle shift (CS), flow offset (FO), and a combined FO-CS algorithm. The FO-CS algorithm integrates elements from both the FO and CS. The FO algorithm operates under the premise that cycle shift operations in network output ports are zero, catering to situations where switches lack more than two queues. Simultaneously, the CS algorithm posits that TS traffic is dispatched with unpredictable initial times (offsets), addressing instances where the access arrival time point is uncontrolled.

 The FO-CS algorithm sets both the offset value and cycle shift value  to zero, sequencing the flows in a predetermined order, as outlined in Algorithm  \ref{algorithm:FOCS}. For each flow $f_{i}$, the initial step determines whether $S(f_{i})$ is 1 according to constraints (12)-(15). If $S(f_{i})$ is 1,  $f_{i}$ will be put into $F_{suc}$ (lines 1-6). Subsequently, if the flow is deemed unschedulable, the algorithm identifies the node where the mapping was unsuccessful. It then incrementally adjusts the cycle at that specific node, ensuring it stays within the constraints of the receiving queue's capacity (lines 9-13). Once a flow $f_{i}$ is scheduled successfully along a route,  it will be placed into the $F_{suc}$ set (lines 14-16). Otherwise, the value of flow offset is increased by one  and re-operates the cycle shift part considering the constraint of offset(lines 17-18). If the flow still cannot be scheduled after traversing all the flow offset and cycle shift operations, the flow will be placed into $F_{fai}$ (line 19). Finally, the algorithm outputs scheduling parameters of $\Omega_{i}$ and $ \mathbf\Delta_{i}$ for each scheduled flow $f_{i}$ in schedulable flow set $F_{suc}$. The CS part is preferentially executed because it has a smaller search space and lower time complexity than FO. Note that the worst-case time complexity of this FO-CS algorithm is $O( \sum_{i=0}^{n-1}\frac{SP_{i}}{T_{cycle}} \times N \times H)$, where $n$ is the number of  flows, $N$ is the queue number at each output port, and $H$ is the number of hops. The queue number $N$ and the hop number $H$ are bounded in a specific network. Therefore, the worst-case time complexity is $O( \sum_{i=0}^{n-1}\frac{SP_{i}}{T_{cycle}})$ for the FO-CS algorithm.

The FO-CS algorithm is flexible and efficient within the following three aspects. First, the maximum queue length and queue number can be adjusted with the increase of flow number and topology size. Second, the FO algorithm and CS algorithm can be utilized respectively under different scheduling scenarios, such as symmetric and asymmetric topologies. In symmetric topologies, traffic aggregation can be greatly alleviated by first-hop shaping of access switches; thus the flow offset algorithm works well independently and reduces the computational overhead. Third, the FO-CS algorithm quickly generates incremental flow scheduling solutions,  which supports online long-distance Industrial IoT traffic transmission.

\IncMargin{1em}
\begin{algorithm}[tp]
	\caption{Tabu FO-CS Algorithm}
	\label{algorithm: TapuFOCS}
	\LinesNumbered
	\KwIn{Flow set: $\mathcal{F}$, Edge-cycle resource: $Q_{E_{(e_{k})}} ^{T(t)}$ }
	\KwOut{Scheduled flow set $\mathcal{F}_{suc}$}
	$ curr\_result \leftarrow Init\_result (\mathcal{F}, Q_{E_{(e_{k})}} ^{T(t)}) $ \;
		$ best\_order \leftarrow Curr\_order( curr\_result) $ \;
		$ best\_result \leftarrow curr\_result $ \;
		\While {Termination criteria not satisifed} {
		$ curr\_order  \leftarrow Neighbor(curr\_order) $ \;
		$curr\_result  \leftarrow Gen\_best(curr\_order)$ \;
  \If{$curr\_result$ better than  $best\_result$}{
  	$  best\_order \leftarrow curr\_order $ \;
  	$best\_result  \leftarrow curr\_result$ \;
  }
}
	return $best\_order, best\_result, F\_{suc}$\;
\end{algorithm}
\DecMargin{1em}

\subsection{Enhanced Flow Sequence Algorithm}

The enhanced flow sequence algorithm creates a varied scheduling order for the flow set $F$, to maximize the resulting number of schedulable time-sensitive flows computed by the FO-CS algorithm. The maximum number of the search space for the sequencing problem is $n!$, where $n$ is the number of flows. Since the brute force approach cannot be scaled to large problem sizes,  it is unacceptable in real-world applications. There are many mete-heuristic algorithms to approximate the optimal solution. Due to the advantage of recording
the previously searched solutions, we utilize the Tabu search\cite{model_tabu} with a simplified Tabu list as a guided exploration of the solution space. Wide-area deployment often is incapable of knowing all traffic prior. Thus, feasible online solutions are more appropriate than optimal offline solutions. We leave other mete-heuristic offline searching strategy with domain-specific knowledge (DSK) as future work.

As outlined in Algorithm \ref{algorithm: TapuFOCS}, the Tabu FO-CS algorithm generates the initial results using a random order, and defines it as the current result (lines 1-3). Then, the neighborhood of the current order is derived from the exchanging mode\cite{itp}. In this mode, a subset of mapped flows in $F_{suc}$ are randomly removed by cleaning the queue resource block associated with these flows. Following this, the flows in $F_{fai}$ are integrated into $F_{suc}$ until no enough network resources can be allocated (lines 5-6). Exchange operations are recorded in (and removed from) the Tabu list to avoid local optima.  If the current result is better than the previous best result, it will be marked as the best result and selected as the next solution for the next iteration. To reduce the superfluous search, the termination criteria are set based on two conditions: either the number of iterations surpasses the maximum threshold $K$, or there is no improvement observed in the best result over the past $P$ iterations (line 4). The efficacy of the Tabu FO-CS algorithm can be adjusted by varying the values of $K$ and $P$. While setting larger values for both parameters might lead to superior outcomes, it could significantly prolong the computation time.

\begin{figure}[t]
	\centering
	\includegraphics[width=3.5in]{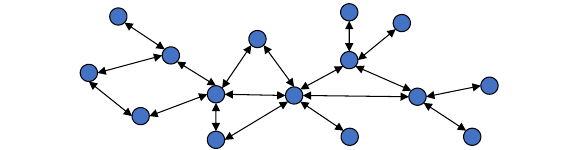}
	\caption{ER topology: an example of graph G(15, 18).}
	\label{fig:topology3}
\end{figure}

\begin{figure*}[]
	\centering
	\subfigure[ Naive \textit{vs} Flow offset and cycle shift (FO-CS).]{
		\begin{minipage}[t]{0.3\textwidth}
			\centering
			\includegraphics[width=\textwidth]{./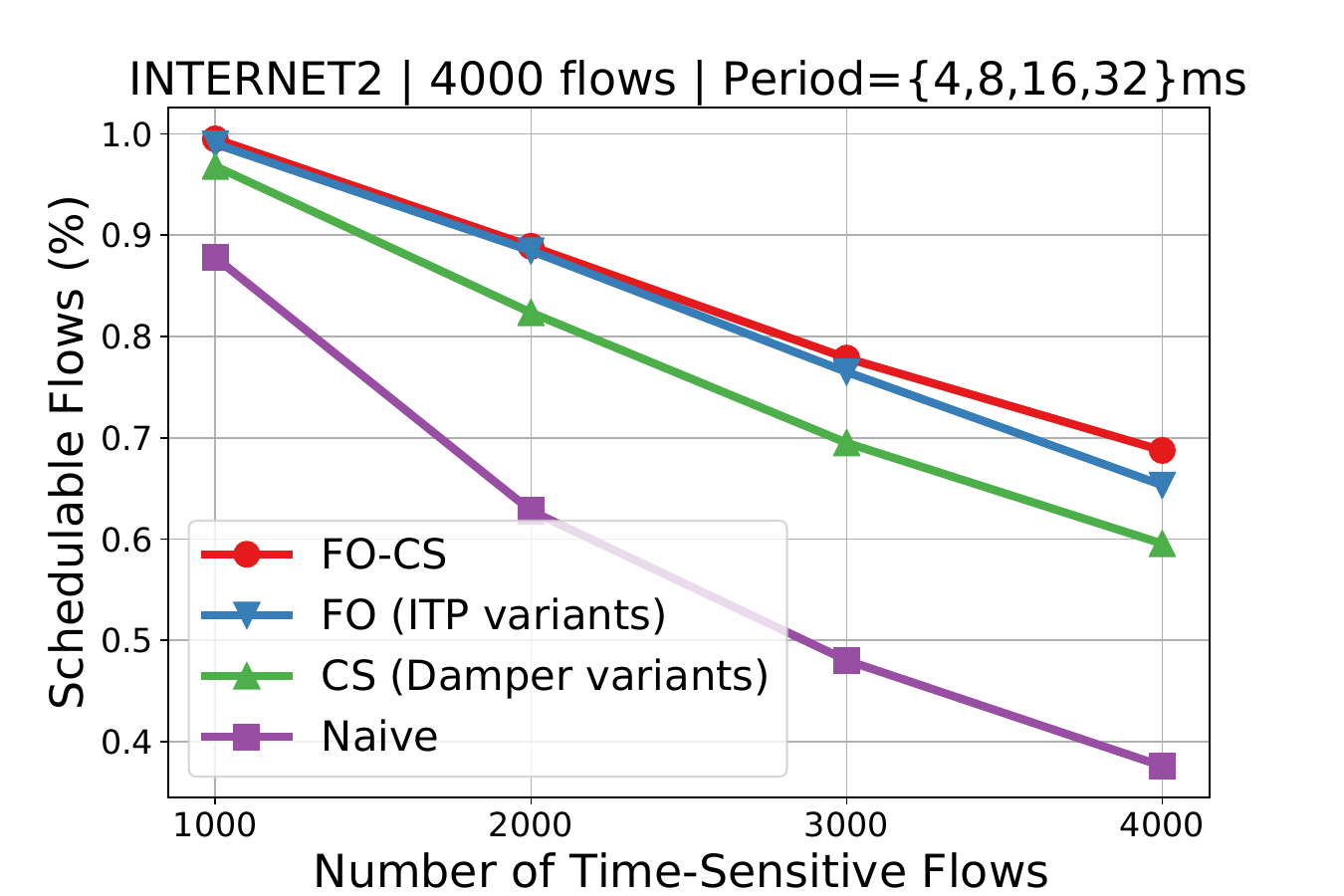}
			\label{fig10:evalutiona}
		\end{minipage}
	}
	\subfigure[ Effect of queue number on cycle shift (CS).]{
		\begin{minipage}[t]{0.3\textwidth}
			\centering
			\includegraphics[width=\textwidth]{./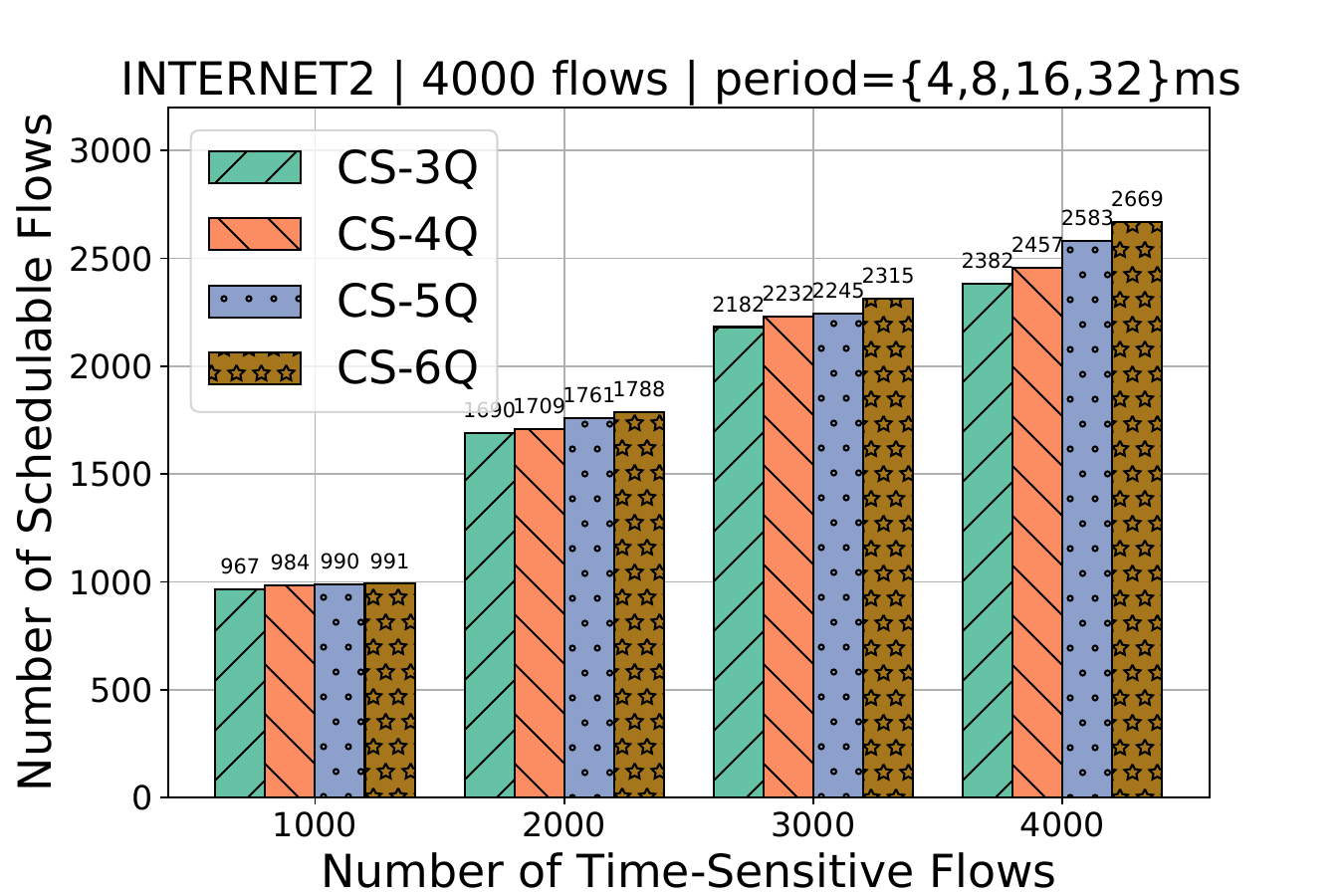}
			\label{fig10:evalutionb}
		\end{minipage}
	}
	\subfigure[Effect of memory allocation on FO-CS.]{
		\begin{minipage}[t]{0.3\textwidth}
			\centering
			\includegraphics[width=\textwidth]{./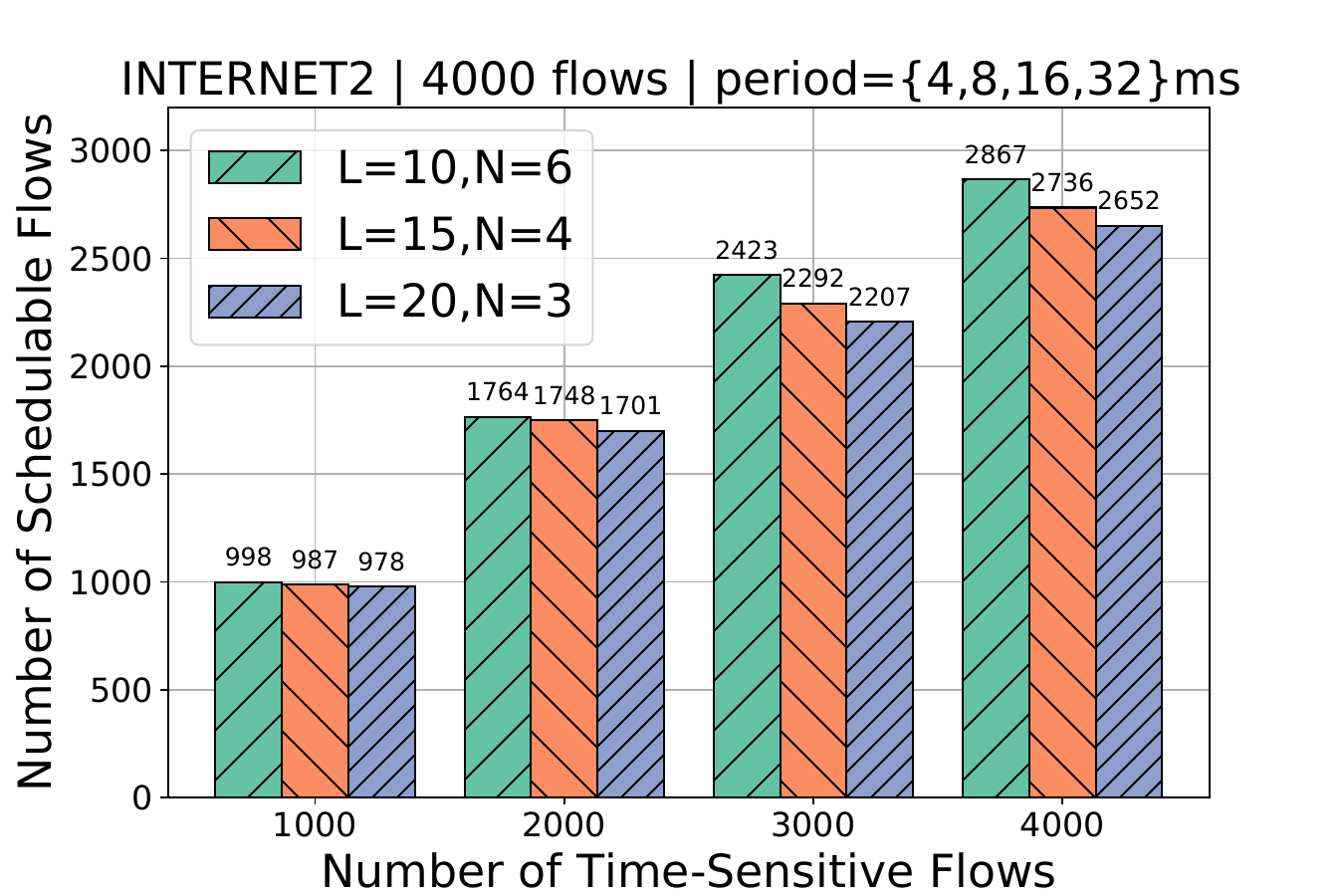}
			\label{fig10:evalutionc}
		\end{minipage}
	}
	\centering
	\caption{ Simulation results for the influence of FO-CS algorithms on  the number of schedulable flows. 
	}
\end{figure*}

\section{Evaluations}
This section presents the simulation results of FO-CS and Tabu FO-CS algorithms. Firstly, we outline the simulation setup. Then, we analyze the algorithm performance, including the number of schedulable flows,  time costs, and scalability.

\subsection{Simulation Setup}

All evaluations are conducted on a computing device powered by four Intel i5-8259U CPUs, operating at a base clock speed of 2.3 GHz, and complemented with 16 GB of RAM.

\subsubsection{ Topology Selection} The INTERNET2\cite{internet2} stands as a pivotal advanced network where research and trials of emerging technologies are conducted. These innovations are then integrated into the next iteration of the Internet, including technologies like IPv6, MPLS, and QoS. Industrial control from remote locations frequently encompasses long-distance transmissions across various regions. To assess the efficacy of our proposed mechanism, we select a segment from the advanced Layer 3 service of INTERNET2's backbone in 2016 as a test topology\cite{internet2}. This segment comprises eight nodes situated in cities including Atlanta, Chicago, Cleveland, Ashburn, Washington, and New York. The geographical distance between these nodes serves as a proxy for the link length. The link delay is computed by dividing the link length by two-thirds the speed of light. To verify the effect of the link connectivity, we leverage the Erd$\ddot{o}$s–R$\acute{e}$nyi  (ER)  model to build a directed weighted random graph with fifteen vertices.  An instance of the $G_{(n, m)}$ random topology is depicted in Fig. \ref{fig:topology3}, where $n$ indicates the number of nodes and $m$ indicates the number of edges\cite{incre_tssdn}\cite{explore_limits}. The link delay is randomly selected from 0.1 ms to 2 ms.

\subsubsection{ Resource Configuration} In our experiments, we configure resources based on parameters such as link bandwidth, queue length, $T_{cycle}$ size, and queue count. Specifically, we allocate a reserved link bandwidth of 1 Gbps, establish each CSQF queue's length at 10, and define the $T_{cycle}$ size as 125 $\mu s$. If the time-sensitive packets do not require a relatively low deadline and the on-chip memory is surplus, the queue length can be set larger. The $T_{cycle}$ size is set to 125 $\mu s$ for three reasons.  First, the transmission delay for 10 MTU-sized packets is about 120 $\mu s$ according to the (8). Second, the minimum value of $T_{cycle}$ is selected to increase the search space of the cycle time. Third, the $T_{cycle}$  better be a divisor of $Hyper_{cycle}$ to simplify the calculation. For TS traffic, the queue count $N$ is set to vary between 2 and 6. When $N$ equals 2, it implies that additional cycle shifts cannot be accommodated.

\subsubsection{ Traffic Characteristics} For each flow $f$, its source and destination are determined by randomly selecting two distinct nodes with a uniform probability. Flows are created in alignment with the traffic patterns observed in wide area control and monitoring systems (as outlined in the IEC 61850) and the industrial machine-to-machine context of the DetNet use case\cite{traffic_type_mapping}\cite{use_case}\cite{traffictype}. Typically, the sending period of every flow is in the range of milliseconds, we emulate it by stochastically selecting one item from the assembly of $\left \{4, 8, 16, 32\right \}$ ms. The number of packets in each period for a given flow is chosen from the assembly of $\left \{1,2,3\right \}$. The deadline of flows ranges from 30 ms to 50 ms.

\begin{figure*}[]
	\centering
	\subfigure[ Naive \textit{vs} Flow offset and cycle shift (FO-CS).]{
		\begin{minipage}[t]{0.3\textwidth}
			\centering
			\includegraphics[width=\textwidth]{./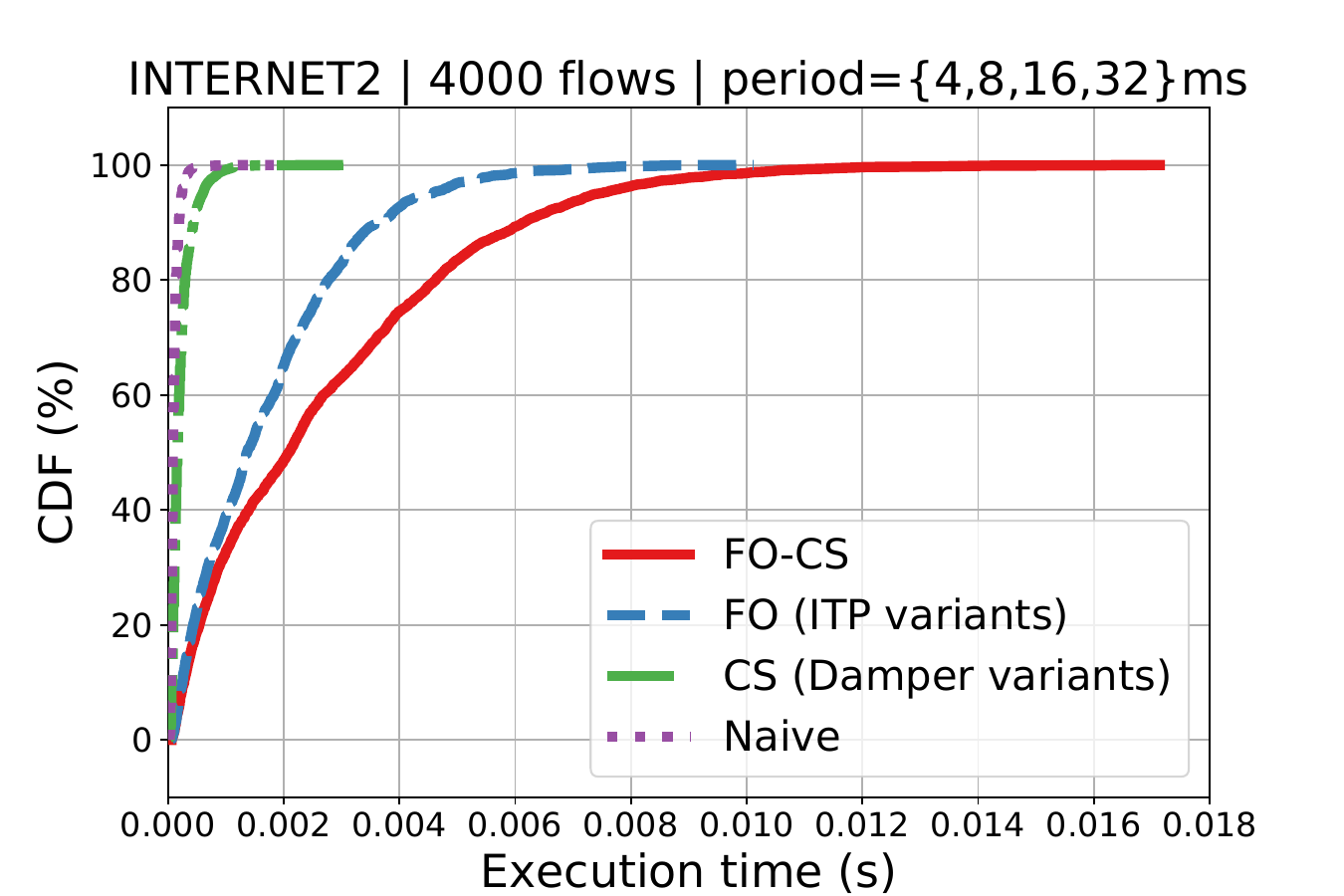}
			\label{fig11:evalutiona}
		\end{minipage}
	}
	\subfigure[ Effect of queue number on cycle shift (CS).]{
		\begin{minipage}[t]{0.3\textwidth}
			\centering
			\includegraphics[width=\textwidth]{./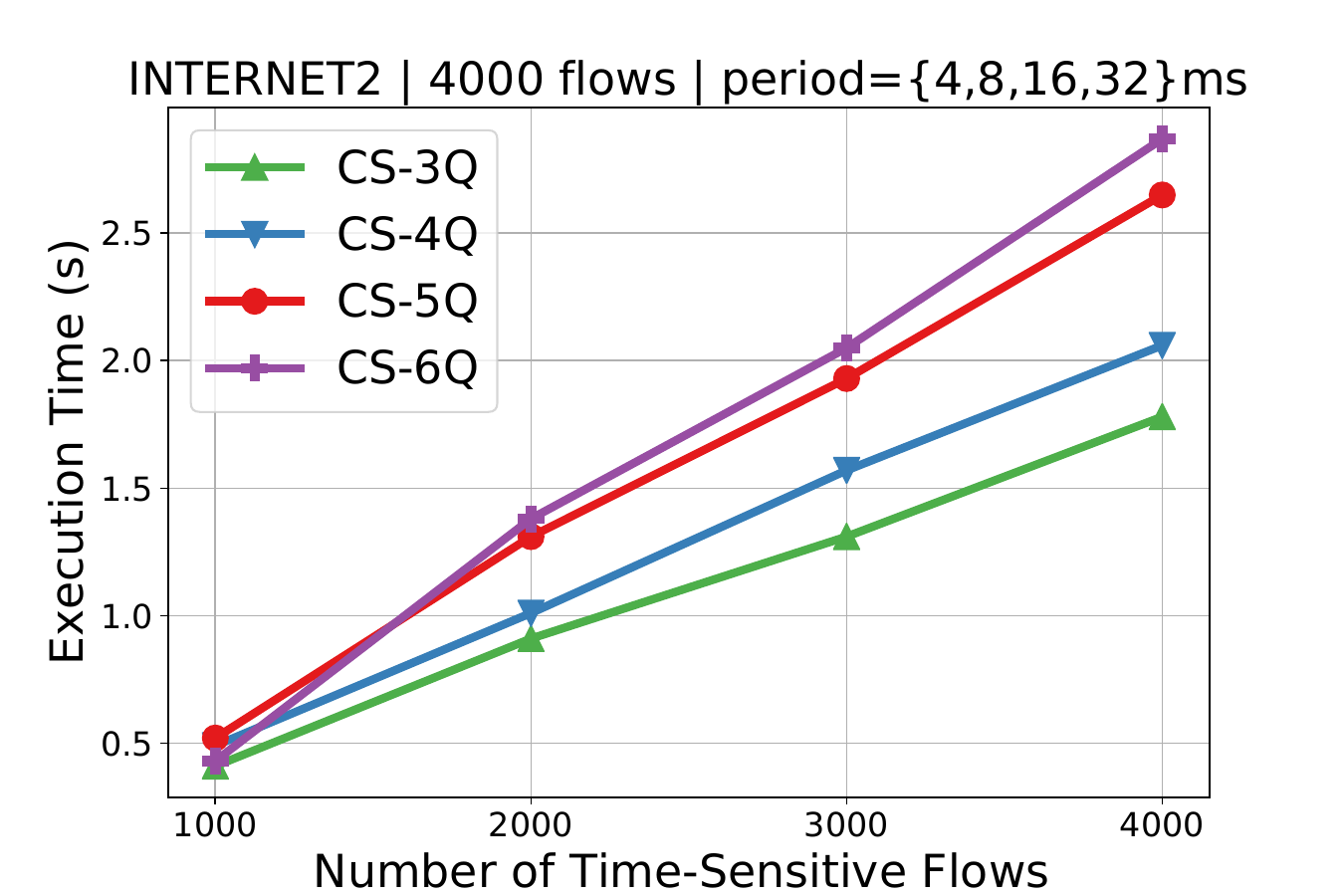}
			\label{fig11:evalutionb}
		\end{minipage}
	}
	\subfigure[Effect of memory allocation on FO-CS.]{
		\begin{minipage}[t]{0.3\textwidth}
			\centering
			\includegraphics[width=\textwidth]{./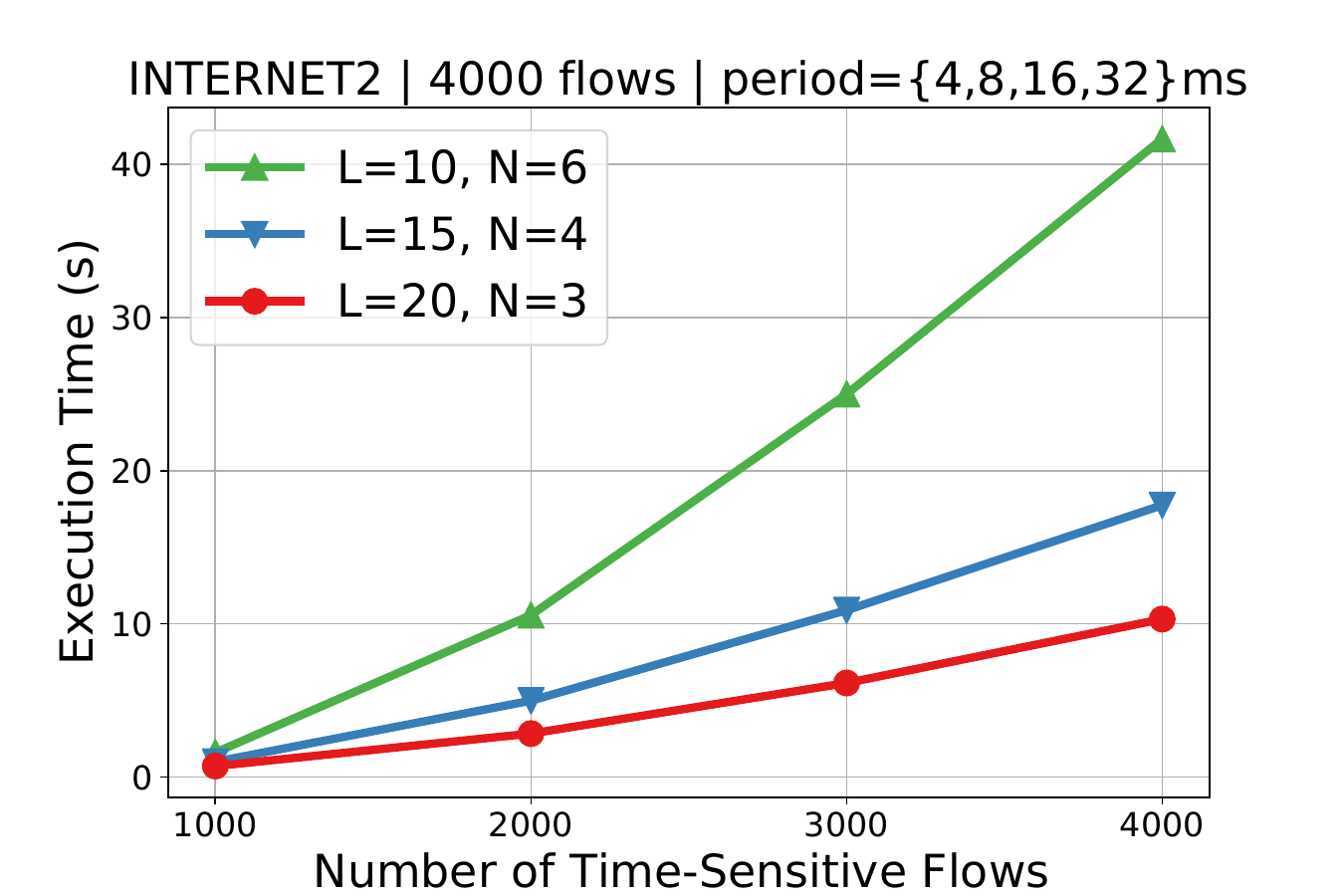}
			\label{fig11:evalutionc}
		\end{minipage}
	}
	\centering
	\caption{ Simulation results for the influence of FO-CS algorithms on the execution time. 
	}
\end{figure*}

\subsection{Simulation Results for FO-CS}

\subsubsection{ Impact on the Schedulability} 
The schedulability is indicated by the number of schedulable flows. Under the condition of Internet2 topology, we partition the  flow counts into four groups from 1000 to 4000, and each group is executed five times to count average scheduled flows. The default queue number is set to 3.

Firstly, we juxtapose the performance of FO-CS algorithm with that of FO, CS, and naive algorithms, considering varying flow counts as presented in Fig. \ref{fig10:evalutiona}. Specifically, the naive algorithm dispatches packets as they're produced, without regulating the cycle shift or sending time. To compare the performances of CSQF with other  mechanisms,  the FO algorithm with two queues, which decides the packet's arrival cycle offset at the first hop, can be considered a benchmark of the ITP\cite{itp}. Damper\cite{9723445} defers packets for a specified amount of time hop by hop. The CS algorithm with more than two queues, which optimizes the selection of receiving queues at each node, can be considered a variant of the Damper. As we scale up the flow number, the advantages of FO-CS become more pronounced compared to the naive approach. The maximum improvement observed is a commendable 31.2\% over the naive approach and 9.2\% over the CS under dealing with 4000 flows. Fig. \ref{fig10:evalutiona} further indicates that FO's scheduling capability is nearly on par with FO-CS. This is because the flow offset operation offers a broader search space than the cycle shift, making offset control more impactful than managing the cycle shift at every hop. For individual flow scheduling in high-volume traffic scenarios, FO-CS remains valuable, offering the benefit of incremental cycle adjustments at each hop.

In situations where the flow offset is unmanageable at the access node, the CS algorithm can operate independently with varying queue numbers. Fig. \ref{fig10:evalutionb} examines the distinctions among the CS-3Q, CS-4Q, CS-5Q, and CS-6Q algorithms. The CS-6Q algorithm consistently outperforms the others across all flow levels. The performance disparity between CS-6Q and CS-3Q fluctuates between 2.4\% and 7.18\%. Notably, as flow numbers rise, the effectiveness of the CS algorithm improves more markedly with a large queue number.

Since the on-chip memory is limited,  Fig. \ref{fig10:evalutionc} evaluates the effect of memory allocation for FO-CS algorithm. Assuming the buffer is sixty packets at each output port, we divide the buffer into three solutions: L=10 and N=6, L=15 and N=4, L=20 and N=3. Evaluation results show that the queue length of 10 has better schedulability than the other two.  The maximal improvement is 7.2\% compared to the queue length of 20 when the flow number is 3000. Thus, fine-grained queue division can improve the scheduling capability.

\subsubsection{ Impact on the Execution Time} 
The algorithm execution time is independent of the flows' sending period. Before sending packets, the talker sends requests to the controller, ensuring that time slot resources are strictly reserved in all sending periods. Thus, the algorithm just needs to be executed once when a flow is initially considered to be accommodated. Fig. \ref{fig11:evalutiona} depicts the cumulative distribution function (CDF) of per-flow execution time for FO-CS, FO, CS, and naive algorithms.  For the FO-CS approach, each flow's execution time does not exceed 18 milliseconds, 90\% of per-flow execution times are under 6 milliseconds, and the total time for 4000 flows is about 20 seconds, revealing that all our algorithms have a feasible time for execution. The CS curve is on top of the FO and the FO-CS,  suggesting that managing the per-flow offset operation is more time-consuming than adjusting the per-node cycle shift.  Fig. \ref{fig11:evalutionb} shows the total execution time of the CS algorithm when scheduling 1000, 2000, 3000, and 4000 flows respectively. The total execution time of CS algorithms with different queue numbers varies from 0.41 seconds to 2.87 seconds. The cycle shift with a larger queue number will slightly increase the running time.  

Fig. \ref{fig11:evalutionc} shows the total running time of the FO-CS algorithm with different flow number groups and memory allocation solutions. When scheduling 1000 flows, the execution time for all solutions maintains at the same low level. However, the execution time of the first solution (L=10 and N=6) increases sharply when the flow number increases. At the flow level of 4000, the execution time of the first solution is twice that of the second solution and four times that of the third solution. Consequently,  the queue division is not as fine as possible when considering the execution time overhead.

\begin{figure*}[]
	\centering
	\subfigure[ Impact on the schedulable flow number.]{
		\begin{minipage}[t]{0.3\textwidth}
			\centering
			\includegraphics[width=\textwidth]{./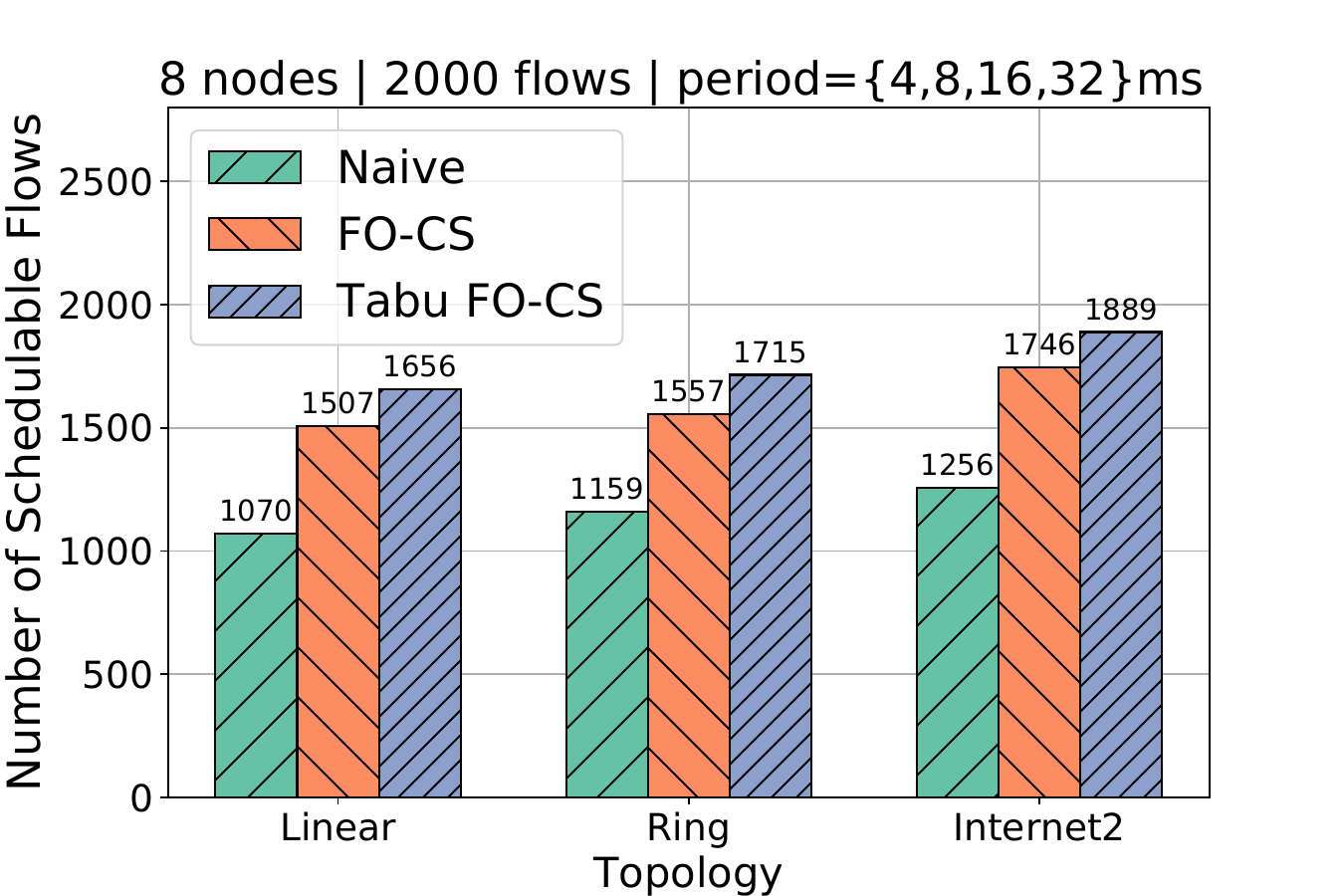}
			\label{fig12:evalutiona}
		\end{minipage}
	}
	\subfigure[ Impact on the execution time.]{
		\begin{minipage}[t]{0.3\textwidth}
			\centering
			\includegraphics[width=\textwidth]{./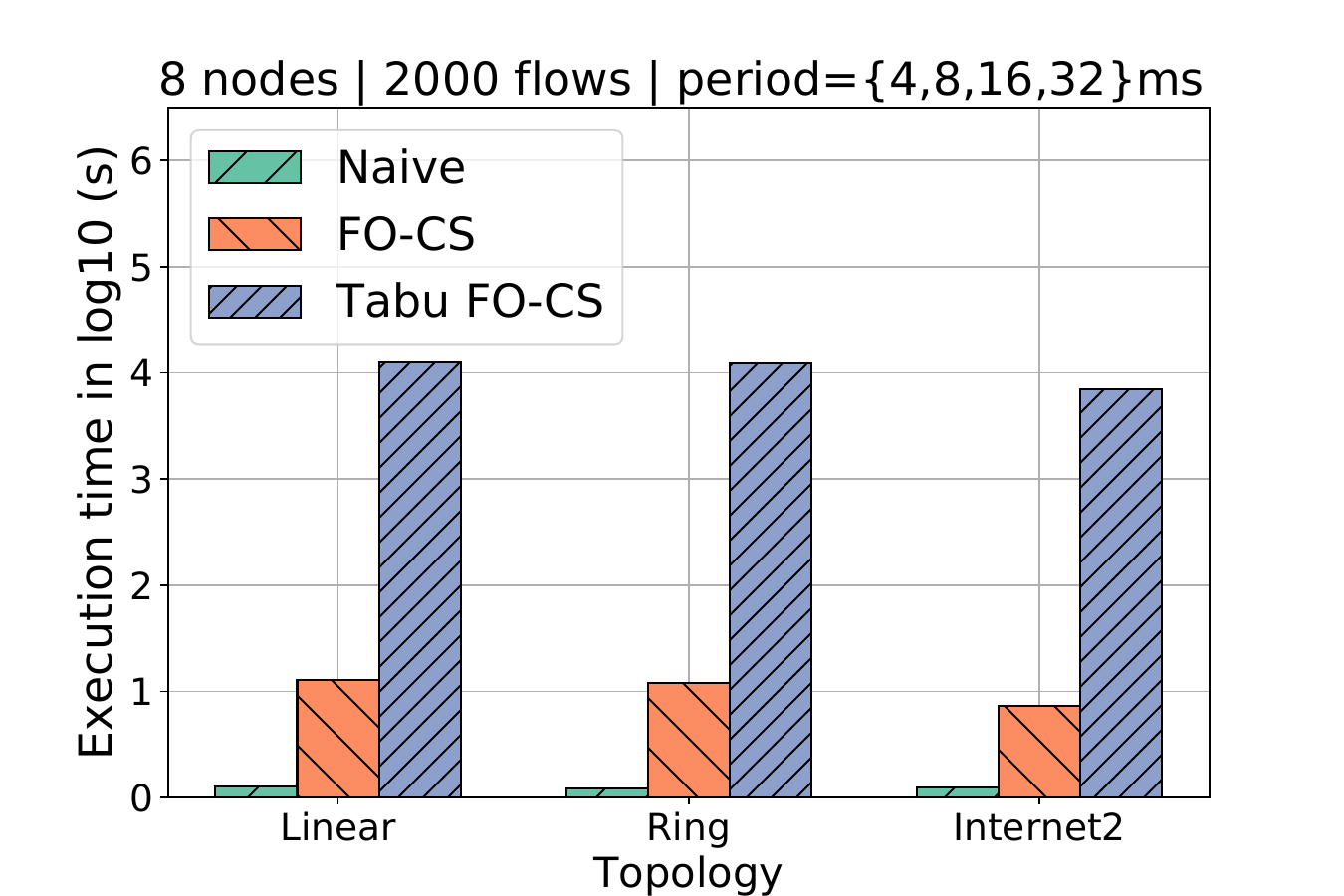}
			\label{fig12:evalutionb}
		\end{minipage}
	}
	\subfigure[Effect of the link connectivity.]{
		\begin{minipage}[t]{0.3\textwidth}
			\centering
			\includegraphics[width=\textwidth]{./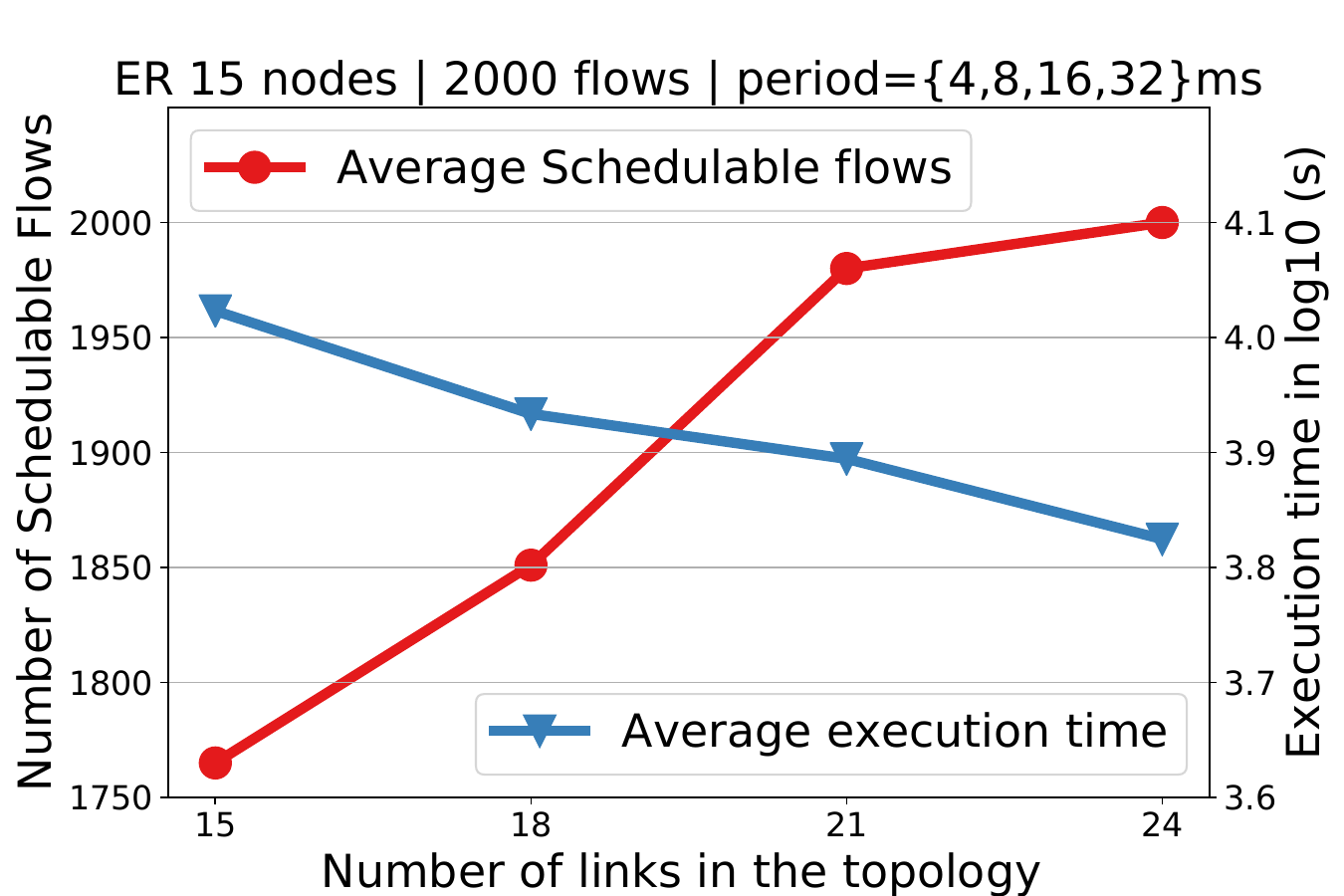}
			\label{fig12:evalutionc}
		\end{minipage}
	}
	\centering
	\caption{ Simulation results for the Tabu FO-CS algorithms under different topologies. 
	}
\end{figure*}

\subsection{Simulation Results for Tabu FO-CS}
Furthermore, we evaluate the performance of Tabu FO-CS under different topologies, including the linear, the ring, and a part of Internet2. For the termination criteria in Algorithm 2, we set the maximum iterations $K$ to 1000 and the maximum repetitions $P$ to 100. The queue length of 10 and queue number of 4 are adopted for the FO-CS algorithm based on the previous evaluations. Each test is performed five times, and we calculate the average number of schedulable flows and the mean time consumed.

\subsubsection{ Impact on the Schedulability} 
In Fig. \ref{fig12:evalutiona}, we juxtapose the performance of the Tabu FO-CS algorithm against the FO-CS and naive algorithms. The evaluations show that the Tabu FO-CS algorithm can schedule 94.45\% of flows at the level of 2000 flows under the  topology of Internet2. In all topologies, the Tabu FO-CS algorithm consistently outperforms the FO-CS algorithm in scheduling outcomes. The maximal improvement is 7.9\% compared to FO-CS under the ring topology, and 31.65\% compared to the naive algorithm under the topology of Internet2.

\subsubsection{ Impact on the Execution Time} 
Fig. \ref{fig12:evalutionb}  illustrates the execution time of the above algorithms. We take the logarithmic function of log10 for the execution time as its value varies widely. The evaluations exhibit that the execution time of the Tabu FO-CS algorithm is on the order of ten thousand seconds, while the FO-CS algorithm is on the order of ten seconds. Thus, the Tabu FO-CS algorithm is suitable for offline optimization scenarios, and the FO-CS algorithm can be applied to the rapid generation of a scheduling solution.

\subsubsection{ Effect of the Link Connectivity} 
Another key performance of the algorithm is scalability. The number of links in the network will have a certain impact on the number of schedulable flows and algorithm execution time.  As depicted in Fig. \ref{fig12:evalutionc}, in the case of 15 links, the Tabu FO-CS algorithm successfully schedules 88.25\% of the flows with an average time cost of 2.9 hours. In the case of 24 links, the number of schedulable flows  improves by 11.75\% with a reduction of 36.68\% in the average execution time.  The reason is that  increased links provide more time slot resources and reduce the possibility of scheduling failures.


\section{Related Works}

There are plenty of related works about small-scale deterministic flow transmission in local Industrial IoT systems, but few works on that in large-scale Industrial Internet. Next, we mainly present related works around the CSQF-based time-sensitive scheduling mechanism.


\textbf{Small-scale Time-sensitive Networks}: In automotive, aerospace, and Industrial IoT scenarios, the enhanced real-time Ethernet technologies, such as the Time-Triggered Ethernet and TSN,  are replacing the traditional fieldbus. Based on global time synchronization, TTE requires strict planning of the sending start time of each packet. The authors in \cite{steiner2010} studied the synthesis of scheduling time tables in TTE with SMT/OMT solver.  Furthermore,  TSN provides a series of time-based gate-controlled mechanisms to guarantee bounded-delay transmission. The majority of research on TSN centers around the creation and refinement of gate control lists\cite{explore_limits}\cite{PAFT}\cite{10121738}. Recently,  Yan \textit{et al}.\cite{itp} proposed planning methods to map the flows onto the ping-pong queue resources of CQF. 



\textbf{Long-distance Industrial IoT Networks}:  CQF is extended to the CSQF to enable the multi-queues scheduling by the IETF DetNet working group. The study \cite{Nasrallah_large_sim} examined the CQF and Paternoster mechanisms within a standard industrial control loop, taking into account different link delays. The authors of \cite{ctsdn} proposed a C-TSDN architecture, in which the CQF and CSQF protocols are jointly designed to realize seamless scheduling for factory IoT infrastructure. \cite{in_edge_control} managed to provide  layer-3 time-sensitive transmission down to the field level, which achieves remote industrial control in a heterogeneous network.  Moreover, Large-scale Deterministic Network (LDN)\cite{LDN} classifies cycle mapping into swap mode and stack mode, and emphasizes the ingress shaping function\cite{towards}. CSQF can be considered as the stack-mode-based LDN. 
The authors of \cite{joint_large1} addressed the joint routing and scheduling issues for CSQF. They target to the path generation with feasible arc-cycle capacity, which cannot adaptively adjust the cycle tags. We solve the cycle tag computation problem, especially considering the long-distance link delay and adjacent cycle mapping relationship. To our knowledge, this is the pioneering study introducing the global cycle tag planning for the CSQF.


\section{Conclusion}

This study introduced the CTP,  a comprehensive integer programming framework tailored for CSQF-driven network-wide TS flow scheduling, which can make CSQF practical for long-distance Industrial IoT scenarios. The model decouples link delays from the cycle time, ensuring uniform queue rotation durations and addressing varying link delays by mapping the arriving cycle to a specific receiving queue. To avoid queue overflows under traffic aggregations, we devised the heuristic FO-CS algorithm for efficient cycle tag calculation.  Empirical evaluations underscored the efficacy of FO-CS, marking a notable 31.2\% enhancement in the number of schedulable flows compared to the Naive method.  The Tabu FO-CS algorithm approaches 94.45\% in scheduling 2000 flows.  In the future, we will optimize multiple objectives and leverage deep reinforcement learning to solve the model.


\bibliographystyle{IEEEtran}

\bibliography{IEEEabrv, reference}

\begin{thebibliography}{10}
\providecommand{\url}[1]{#1}
\csname url@samestyle\endcsname
\providecommand{\newblock}{\relax}
\providecommand{\bibinfo}[2]{#2}
\providecommand{\BIBentrySTDinterwordspacing}{\spaceskip=0pt\relax}
\providecommand{\BIBentryALTinterwordstretchfactor}{4}
\providecommand{\BIBentryALTinterwordspacing}{\spaceskip=\fontdimen2\font plus
\BIBentryALTinterwordstretchfactor\fontdimen3\font minus
  \fontdimen4\font\relax}
\providecommand{\BIBforeignlanguage}[2]{{%
\expandafter\ifx\csname l@#1\endcsname\relax
\typeout{** WARNING: IEEEtran.bst: No hyphenation pattern has been}%
\typeout{** loaded for the language `#1'. Using the pattern for}%
\typeout{** the default language instead.}%
\else
\language=\csname l@#1\endcsname
\fi
#2}}
\providecommand{\BIBdecl}{\relax}
\BIBdecl

\bibitem{iiot}
S.~Vitturi, C.~Zunino, and T.~Sauter, ``{Industrial communication systems and
  their future challenges: next-generation Ethernet, IIoT, and 5G},''
  \emph{Proceedings of the IEEE}, vol. 107, no.~6, pp. 944--961, 2019.

\bibitem{10293177}
M.~Ulbricht, S.~Senk, H.~K. Nazari, H.-H. Liu, M.~Reisslein, G.~T. Nguyen, and
  F.~H.~P. Fitzek, ``Tsn-flextest: Flexible tsn measurement testbed,''
  \emph{IEEE Transactions on Network and Service Management}, 2023.

\bibitem{remote_control}
I.~Pelle, F.~Paolucci, B.~Sonkoly, and F.~Cugini, ``Latency-sensitive
  edge/cloud serverless dynamic deployment over telemetry-based packet-optical
  network,'' \emph{IEEE Journal on Selected Areas in Communications}, vol.~39,
  no.~9, pp. 2849--2863, 2021.

\bibitem{smart_grid}
B.~Hu and H.~Gharavi, ``A hybrid wired/wireless deterministic network for smart
  grid,'' \emph{IEEE Wireless Communications}, vol.~28, no.~3, pp. 138--143,
  2021.

\bibitem{digital_twin}
M.~Groshev, C.~Guimarães, J.~Martín-Pérez, and A.~de~la Oliva, ``Toward
  intelligent cyber-physical systems: digital twin meets artificial
  intelligence,'' \emph{IEEE Communications Magazine}, vol.~59, no.~8, pp.
  14--20, 2021.

\bibitem{URLLC}
A.~Nasrallah, A.~S. Thyagaturu, Z.~Alharbi, C.~Wang, X.~Shao, M.~Reisslein, and
  H.~ElBakoury, ``{Ultra-low latency (ULL) networks: the IEEE TSN and IETF
  DetNet standards and related 5G ULL research},'' \emph{IEEE Communications
  Surveys Tutorials}, vol.~21, no.~1, pp. 88--145, 2019.

\bibitem{9788573}
L.~Zhao, P.~Pop, and S.~Steinhorst, ``Quantitative performance comparison of
  various traffic shapers in time-sensitive networking,'' \emph{IEEE
  Transactions on Network and Service Management}, vol.~19, no.~3, pp.
  2899--2928, 2022.

\bibitem{ITUR}
ITU-R, \emph{{IMT traffic estimates for the years 2020 to 2030}}, 2015,
  \url{https://www.itu.int/dms_pub/itu-r/opb/rep/R-REP-M.2370-2015-PDF-E.pdf}.

\bibitem{incre_tssdn}
N.~G. Nayak, F.~D{\"u}rr, and K.~Rothermel, ``{Incremental flow scheduling and
  routing in time-sensitive software-defined networks},'' \emph{IEEE
  Transactions on Industrial Informatics}, vol.~14, no.~5, pp. 2066--2075,
  2017.

\bibitem{load_balancing_csqf}
S.~{Chen}, J.~{Leguay}, S.~{Martin}, and P.~{Medagliani}, ``{Load balancing for
  deterministic networks},'' in \emph{Proc. 2020 IFIP Networking}, 2020, pp.
  785--790.

\bibitem{PCSQ}
Y.~Huang, S.~Wang, S.~Zhu \emph{et~al.}, ``Programmable cycle-specified queue
  for deterministic networking,'' in \emph{Proceedings of the ACM SIGCOMM
  Conference}, 2023, pp. 1132--1134.

\bibitem{ceni}
S.~Wang, B.~Wu, C.~Zhang, Y.~Huang, T.~Huang, and Y.~Liu, ``{Large-scale
  deterministic IP networks on CENI},'' in \emph{Proc. IEEE INFOCOM WKSHPS},
  2021, pp. 1--6.

\bibitem{itp}
J.~{Yan}, W.~{Quan}, X.~{Jiang}, and Z.~{Sun}, ``{Injection time planning:
  making CQF practical in time-sensitive networking},'' in \emph{Proc. IEEE
  INFOCOM}, 2020, pp. 616--625.

\bibitem{joint_large1}
J.~Krolikowski, S.~Martin, P.~Medagliani, J.~Leguay, S.~Chen, X.~Chang, and
  X.~Geng, ``Joint routing and scheduling for large-scale deterministic ip
  networks,'' \emph{Computer Communications}, vol. 165, pp. 33--42, 2021.

\bibitem{9854862}
Y.~Huang, S.~Wang, X.~Zhang, T.~Huang, and Y.~Liu, ``Flexible cyclic queuing
  and forwarding for time-sensitive software-defined networks,'' \emph{IEEE
  Transactions on Network and Service Management}, vol.~20, no.~1, pp.
  533--546, 2023.

\bibitem{explore_limits}
J.~Falk, F.~D{\"u}rr, and K.~Rothermel, ``{Exploring practical limitations of
  joint routing and scheduling for TSN with ILP},'' in \emph{Proc. IEEE
  Embedded and Real-Time Computing Systems and Applications}, 2018, pp.
  136--146.

\bibitem{in_edge_control}
A.~{Badar}, D.~Z. {Lou}, U.~{Graf}, C.~{Barth}, and C.~{Stich}, ``{Intelligent
  edge control with deterministic-IP based industrial communication in process
  automation},'' in \emph{Proc. Network and Service Management}, 2019, pp.
  1--7.

\bibitem{ts22261}
\emph{3GPP TS22261 v16.2.0. Service requirements for the 5G system}, 2017,
  \url{https://www.3gpp.org/ftp//Specs/archive/22_series/22.261/22261-g20.zip}.

\bibitem{use_case}
\BIBentryALTinterwordspacing
\emph{{IETF deterministic networking use cases}}. [Online]. Available:
  \url{https://datatracker.ietf.org/doc/rfc8578/.}
\BIBentrySTDinterwordspacing

\bibitem{ieee8021as}
``{IEEE} standard for local and metropolitan area networks - timing and
  synchronization for time-sensitive applications in bridged local area
  networks,'' \emph{IEEE Std 802.1AS-2011}, pp. 1--292, 2011.

\bibitem{synce}
J.-L. Ferrant, M.~Gilson, S.~Jobert, M.~Mayer, M.~Ouellette, L.~Montini,
  S.~Rodrigues, and S.~Ruffini, ``{Synchronous Ethernet}: a method to transport
  synchronization,'' \emph{IEEE Communications Magazine}, vol.~46, no.~9, pp.
  126--134, 2008.

\bibitem{towards}
B.~Liu, S.~Ren, C.~Wang \emph{et~al.}, ``{Towards large-scale deterministic IP
  networks},'' in \emph{IFIP Networking Conference}, 2021, pp. 1--9.

\bibitem{csqf_huang}
Y.~Huang, S.~Wang, T.~Feng, J.~Wang, T.~Huang, R.~Huo, and Y.~Liu, ``{Towards
  network-wide scheduling for cyclic traffic in IP-based deterministic
  networks},'' in \emph{2021 4th International Conference on Hot
  Information-Centric Networking (HotICN)}, 2021.

\bibitem{qbv_toc}
M.~Vlk, Z.~Hanzálek, K.~Brejchová, S.~Tang, S.~Bhattacharjee, and S.~Fu,
  ``Enhancing schedulability and throughput of time-triggered traffic in ieee
  802.1qbv time-sensitive networks,'' \emph{IEEE Transactions on
  Communications}, vol.~68, no.~11, pp. 7023--7038, 2020.

\bibitem{9212141}
A.~Larrañaga, M.~C. Lucas-Estañ, I.~Martinez, I.~Val, and J.~Gozalvez,
  ``Analysis of 5g-tsn integration to support industry 4.0,'' in \emph{IEEE
  ETFA}, 2020, pp. 1111--1114.

\bibitem{traffic_type_mapping}
\BIBentryALTinterwordspacing
{IEC 60802 Standard}. [Online]. Available:
  \url{http://grouper.ieee.org/groups/802/1/files/public/docs2019/60802-Hotta-Traffic-Types-Mapping-to-TSN-Mechanism-0119-v01.}
\BIBentrySTDinterwordspacing

\bibitem{model_tabu}
R.~Macchiaroli, S.~Mole, and S.~Riemma, ``Modelling and optimization of
  industrial manufacturing processes subject to no-wait constraints,''
  \emph{International Journal of Production Research}, vol.~37, no.~11, pp.
  2585--2607, 1999.

\bibitem{internet2}
J.~{Castillo-Velazquez}, D.~{Serrano-Martinez}, and A.~{Morales}, ``{Emulation
  of the connectivity of backbone and management for the layer 3 service of
  internet2: 2016 Topology},'' in \emph{Proc. IEEE Central America and Panama
  Convention}, 2017, pp. 1--4.

\bibitem{traffictype}
\BIBentryALTinterwordspacing
D.~Paul, A.~Astrit, and P.~David, ``Time sensitive networks for flexible
  manufacturing testbed - description of converged traffic types.'' [Online].
  Available: \url{https://hub.iiconsortium.org/tsn-converged-traffic-types.}
\BIBentrySTDinterwordspacing

\bibitem{9723445}
E.~Mohammadpour and J.-Y. Le~Boudec, ``Analysis of dampers in time-sensitive
  networks with non-ideal clocks,'' \emph{IEEE/ACM Transactions on Networking},
  vol.~30, no.~4, pp. 1780--1794, 2022.

\bibitem{steiner2010}
W.~Steiner, ``{An evaluation of SMT-based schedule synthesis for time-triggered
  multi-hop networks},'' in \emph{Proc. Real Time Syst. Symp.}, 2010, pp.
  375--384.

\bibitem{PAFT}
Z.~Feng, M.~Cai, and Q.~Deng, ``An efficient pro-active fault-tolerance
  scheduling of ieee 802.1qbv time-sensitive network,'' \emph{IEEE Internet of
  Things Journal}, vol.~9, no.~16, pp. 14\,501--14\,510, 2022.

\bibitem{10121738}
G.~Miranda, E.~Municio, J.~Haxhibeqiri, J.~Hoebeke, I.~Moerman, and J.~M.
  Marquez-Barja, ``Enabling time-sensitive network management over multi-domain
  wired/wi-fi networks,'' \emph{IEEE Transactions on Network and Service
  Management}, vol.~20, no.~3, pp. 2386--2399, 2023.

\bibitem{Nasrallah_large_sim}
A.~Nasrallah, V.~Balasubramanian, A.~S. Thyagaturu, M.~Reisslein, and
  H.~Elbakoury, ``{Large scale deterministic networking: a simulation
  evaluation},'' \emph{ArXiv}, vol. abs/1910.00162, 2019.

\bibitem{ctsdn}
Y.~Huang, S.~Wang, T.~Huang, and Y.~Liu, ``Cycle-based time-sensitive and
  deterministic networks: Architecture, challenges, and open issues,''
  \emph{IEEE Communications Magazine}, vol.~60, no.~6, pp. 81--87, 2022.

\bibitem{LDN}
Q.~li, X.~Geng, B.~Liu, T.~Eckert, L.~Geng, and G.~Li, \emph{{Large-Scale
  Deterministic IP Network}}, 2019,
  \url{https://datatracker.ietf.org/doc/html/draft-qiang-detnet-large-scale-detnet-05}.

\end{thebibliography}


\begin{thebibliography}{00}
\bibitem{b1} G. Eason, B. Noble, and I. N. Sneddon, ``On certain integrals of Lipschitz-Hankel type involving products of Bessel functions,'' Phil. Trans. Roy. Soc. London, vol. A247, pp. 529--551, April 1955.
\bibitem{b2} J. Clerk Maxwell, A Treatise on Electricity and Magnetism, 3rd ed., vol. 2. Oxford: Clarendon, 1892, pp.68--73.
\bibitem{b3} I. S. Jacobs and C. P. Bean, ``Fine particles, thin films and exchange anisotropy,'' in Magnetism, vol. III, G. T. Rado and H. Suhl, Eds. New York: Academic, 1963, pp. 271--350.
\bibitem{b4} K. Elissa, ``Title of paper if known,'' unpublished.
\bibitem{b5} R. Nicole, ``Title of paper with only first word capitalized,'' J. Name Stand. Abbrev., in press.
\bibitem{b6} Y. Yorozu, M. Hirano, K. Oka, and Y. Tagawa, ``Electron spectroscopy studies on magneto-optical media and plastic substrate interface,'' IEEE Transl. J. Magn. Japan, vol. 2, pp. 740--741, August 1987 [Digests 9th Annual Conf. Magnetics Japan, p. 301, 1982].
\bibitem{b7} M. Young, The Technical Writer's Handbook. Mill Valley, CA: University Science, 1989.
\end{thebibliography}

\end{document}